\documentclass[12pt,onecolumn,draftclsnofoot]{IEEEtran}
\usepackage{graphicx}
\usepackage{epsfig}
\usepackage{subfigure}
\usepackage{amssymb}
\usepackage{amsbsy}
\usepackage{amsmath}
\usepackage{cite}
\usepackage{url}

\usepackage{xcolor}
\usepackage{color}
\usepackage{float}
\usepackage{times,amsmath,epsfig}
\usepackage{xspace,latexsym,syntonly}
\usepackage{amssymb}
\usepackage{amsfonts}
\usepackage{textcomp}
\usepackage{subfigure}
\usepackage{amsbsy}
\usepackage{cite}












\newcommand{\E}{\ensuremath{\hbox{\textbf{E}}}}


%
\newtheorem{theorem}{Theorem}
\newtheorem{lemma}{Lemma}
\newtheorem{remark}{Remark}
\newtheorem{definition}{Definition}


\newcommand{\beq}{\begin{equation}}
\newcommand{\eeq}{\end{equation}}
\newcommand{\bea}{\begin{array}}
\newcommand{\ena}{\end{array}}
\newcommand{\bds}{\begin {itemize}}
\newcommand{\eds}{\end {itemize}}
\newcommand{\bdf}{\begin{definition}}
\newcommand{\blm}{\begin{lemma}}
\newcommand{\edf}{\end{definition}}
\newcommand{\elm}{\end{lemma}}
\newcommand{\bthm}{\begin{theorem}}
\newcommand{\ethm}{\end{theorem}}
\newcommand{\bprp}{\begin{prop}}
\newcommand{\eprp}{\end{prop}}
\newcommand{\bcl}{\begin{claim}}
\newcommand{\ecl}{\end{claim}}
\newcommand{\bcr}{\begin{coro}}
\newcommand{\ecr}{\end{coro}}
\newcommand{\bquest}{\begin{question}}
\newcommand{\equest}{\end{question}}


\newcommand{\larrow}{{\larrow}}

\newcommand{\nin}{{\not \in}}




\def\urltilda{\kern -.15em\lower .7ex\hbox{\~{}}\kern .04em}

\begin{document}
\title{Searching for Anomalies over \\ Composite Hypotheses}
\author{Bar Hemo, Tomer Gafni, Kobi Cohen, Qing Zhao
\thanks{Bar Hemo, Tomer Gafni and Kobi Cohen are with the School of Electrical and Computer Engineering, Ben-Gurion University of the Negev, Beer-Sheva 84105, Israel. Email: $\left\{\mbox{barh, gafnito, yakovsec}\right\}$@bgu.ac.il}
\thanks{Qing Zhao is with the School of Electrical and Computer Engineering, Cornell University, Ithaca, NY 14853, USA. Email: qz16@cornell.edu}
\thanks{Bar Hemo and Tomer Gafni contributed equally to this paper.}
\thanks{Part of this work was presented at the 16th IEEE International Symposium on Signal Processing and Information Technology (ISSPIT), Limassol, Cyprus, Dec. 2016 \cite{hemo2016asymptotically}.}
\thanks{This work has been submitted to the IEEE for possible publication. Copyright may be transferred without notice, after which this version may no longer be accessible.}
\thanks{The work of B. Hemo, T. Gafni, and K. Cohen was supported by the Israel National Cyber Bureau under grant 076/16, and by the U.S.-Israel Binational Science Foundation (BSF) under grant 2017723. The work of Q. Zhao was supported by the U.S. Army Research Office under Grant W911NF-17-1-0464, and the National Science Foundation  under Grant CCF-1815559.}
}
\date{}
\maketitle

\begin{abstract}
\label{sec:abstract}
The problem of detecting anomalies in multiple processes is considered. We consider a composite hypothesis case, in which the measurements drawn when observing a process follow a common distribution with an unknown parameter (vector), whose value lies in normal or abnormal parameter spaces, depending on its state. The objective is a sequential search strategy that minimizes the expected detection time subject to an error probability constraint. We develop a deterministic search algorithm with the following desired properties. First, when no additional side information on the process states is known, the proposed algorithm is asymptotically optimal in terms of minimizing the detection delay as the error probability approaches zero. Second, when the parameter value under the null hypothesis is known and equal for all normal processes, the proposed algorithm is asymptotically optimal as well, with better detection time determined by the true null state. Third, when the parameter value under the null hypothesis is unknown, but is known to be equal for all normal processes, the proposed algorithm is consistent in terms of achieving error probability that decays to zero with the detection delay. Finally, an explicit upper bound on the error probability under the proposed algorithm is established for the finite sample regime. Extensive experiments on synthetic dataset and DARPA intrusion detection dataset are conducted, demonstrating strong performance of the proposed algorithm over existing methods.
\end{abstract}
%
\IEEEpeerreviewmaketitle


\section{Introduction}
\label{sec:introduction}

We consider the problem of searching for an anomalous process (or few abnormal processes) among $M$ processes. For convenience, we often refer to the processes as cells and the anomalous process as the target which can locate in any of the $M$ cells. The decision maker is allowed to search for the target over $K$ cells at a time ($1\leq K\leq M$). We consider the composite hypothesis case, where the observation distribution has an unknown parameter (vector). When taking observations from a certain cell, random continuous values are measured which are drawn from a common distribution $f$. The distribution $f$ has an unknown parameter, belonging to parameter spaces $\Theta^{(0)}$ or $\Theta^{(1)}$, depending on whether the target is absent or present, respectively. The objective is a sequential search strategy that minimizes the expected detection time subject to an error probability constraint.
The anomaly detection problem finds applications in intrusion detection in cyber systems for quickly locating infected nodes by detecting statistical anomalies, spectrum scanning in cognitive radio networks for quickly detecting idle channels, and event detection in sensor networks.

\subsection{Main Results}
\label{ssec:main_results}

Dynamic search algorithms can be broadly classified into two classes:
(i)~Algorithms that use open-loop selection rules, in which the decision of which cell to search is predetermined and independent of the sequence of observations. The stopping rule, that decides when to stop collecting observations from the current cell, and whether to switch to the next cell or stop the test, however, is dynamically updated based on past observations. In this class of algorithms, tractable optimal solutions have been obtained under various settings of observation distributions (see e.g., \cite{Lai_2011_Quickest, Malloy_2012_Quickest, Tajer_2013_Quick, cohen2014optimal}).
(ii)~Algorithms that use closed-loop selection rules, in which the decision of which cell to search is based on past observations. The focus is on addressing the full-blown dynamic problem by jointly optimizing both selection and stopping rules in decision making (see e.g., \cite{Zigangirov_1966_Problem, Castanon_1995_Optimal, cohen2015active, vaidhiyan2015learning, nitinawarat2017universal, huang2018active, gurevich2019sequential}). In this setting, however, tractable optimal solutions have been obtained only for very special cases of observation distributions (\hspace{-0.01cm}\cite{Zigangirov_1966_Problem, Castanon_1995_Optimal}). In this paper we focus on the latter setting.

Since observations are drawn in a one-at-a-time manner, we are facing a sequential detection problem over multiple composite hypotheses. Sequential detection problems involving multiple processes are partially-observed Markov decision processes (POMDP) \cite{Castanon_1995_Optimal} which have exponential complexity in general. As a result, computing optimal search policies is intractable (except for some special cases of observation distributions as in \cite{Zigangirov_1966_Problem, Castanon_1995_Optimal}). When dealing with composite hypotheses, computing optimal policies is intractable even for the single process case. For tractability, a commonly adopted performance measure is asymptotic optimality in terms of minimizing the detection time as the error probability approaches zero (see, for example, classic and recent results in \cite{Chernoff_1959_Sequential, Schwarz_1962_Asymptotic, Lai_1988_Nearly, Pavlov_1990_Sequential, Tartakovsky_2002_Efficient, Nitinawarat_2013_Controlled, Naghshvar_2013_Active, naghshvar2013sequentiality, nitinawarat2015controlled, cohen2015active, vaidhiyan2015learning, song2016sequential, song2017asymptotically, nitinawarat2017universal}). The focus of this paper is thus on asymptotically optimal strategies with low computational complexity. Our main contributions are three fold, as detailed below.\\
\paragraph{A general model for composite hypotheses}

Dynamic search problems have been investigated under various models of observation distributions in past and recent years. Closed-loop solutions have been obtained under known Wiener processes \cite{Zigangirov_1966_Problem}, known symmetric distributions \cite{Castanon_1995_Optimal}, known general distributions \cite{cohen2015active}, known Poisson point processes with unknown parameters \cite{vaidhiyan2015learning}, and unknown distributions in which the measurements can take values from a finite alphabet \cite{nitinawarat2017universal}. By contrast to these works, in this paper the dynamic search is conducted over a general known distribution model with unknown parameters that lie in disjoint normal and abnormal parameter sets, and the measurements can take continuous values. This distribution model finds applications in traffic analysis in computer networks \cite{Thatte_2011_Parametric} and spectrum scanning in cognitive radio networks \cite{font2010glrt} for instance. Handling this observation model in the dynamic search setting leads to fundamentally different algorithm design and analysis as compared to existing methods.\\
\paragraph{Algorithm development} In terms of algorithm development, the proposed algorithm is deterministic and has low-complexity implementations. Specifically, the proposed algorithm consists of exploration and exploitation phases. During exploration, the cells are probed in a round-robin manner for learning the unknown parameters. During exploitation, the most informative observations are collected based on the estimated parameters. We point out that our algorithm uses only bounded exploration time under the setting without side information (Section \ref{ssec:uncorrelated}) and when the null hypothesis is assumed known (Section \ref{ssec:known}), which is of particular significance. It is in sharp contrast with the logarithmic order of exploration time commonly seen in active search strategies (see, for example, \cite{cohen2015asymptotically, nitinawarat2017universal} or even linear order of exploration time in \cite{vaidhiyan2015learning}).\\
\paragraph{Performance analysis} In terms of theoretical performance analysis, we prove that the proposed algorithm achieves asymptotic optimality when no additional side information on the process states is known, and a single location is probed at a time (as widely assumed in dynamic search studies for purposes of analysis, e.g., \cite{Zigangirov_1966_Problem, Tognetti_1968_An, Kadane_1971_Optimal, Castanon_1995_Optimal, Zhai_2013_Dynamic, vaidhiyan2015learning, nitinawarat2017universal}). 
Furthermore, when the parameter value under the null hypothesis is known (i.e., as widely applied in anomaly detection cases, and also assumed in \cite{nitinawarat2017universal} for establishing asymptotic optimality), we establish asymptotic optimality as well, with better detection time determined by the true null state. We also consider the case where the parameter value under the null hypothesis is unknown, but is identical for all normal processes. In this case, the proposed algorithm is shown to be consistent in terms of achieving error probability that decays to zero with time. In addition to the asymptotic analysis, an explicit upper bound on the error probability is established under the finite sample regime. Extensive numerical experiments on synthetic dataset and DARPA intrusion detection dataset have been conducted to demonstrate the efficiency of the proposed algorithm.

\subsection{Related work}
\label{ssec:related}
\noindent
Optimal solutions for target search or target whereabout problems have been obtained under some special cases when a single location is probed at a time.
Modern application areas of search problems with limited sensing resources include narrowband spectrum scanning \cite{egan2017fast,Zhao_2010_Quickest}, event detection by a fusion center that communicates with sensors using narrowband transmission \cite{blum2008energy,cohen2011energy}, and sensor  visual search studied recently by neuroscientists \cite{vaidhiyan2015learning}.
Results under the sequential setting can be found in~\cite{Zigangirov_1966_Problem, Klimko_1975_Optimal, Dragalin_1996_Simple, Stone_1971_Optimal, cohen2014optimal, blum2008energy, cohen2011energy}. Specifically, optimal policies were derived in~\cite{Zigangirov_1966_Problem, Klimko_1975_Optimal, Dragalin_1996_Simple} for the problem of quickest search over Wiener processes. In~\cite{Stone_1971_Optimal, cohen2014optimal}, optimal search strategies were established under the constraint that switching to a new process is allowed only when the state of the currently probed process is declared. Optimal policies under general distributions and unconstrained search model remain an open question. In this paper we address this question under the asymptotic regime as the error probability approaches zero. Optimal search strategies when a single location is probed at a time and a fixed sample size have been established under binary-valued measurements \cite{Tognetti_1968_An, Kadane_1971_Optimal, Zhai_2013_Dynamic}, and under known symmetric distributions of continuous observations \cite{Castanon_1995_Optimal}. In this paper, however, we focus on the sequential setting and general composite hypothesis case.
	
Sequential tests for hypothesis testing problems have attracted much attention since Wald's pioneering work on sequential analysis \cite{Wald_1947_Sequential} due to their property of reaching a decision at a much earlier stage than would be possible with fixed-size tests. Wald established the Sequential Probability Ratio Test (SPRT) for a binary hypothesis testing of a single process. Under the simple hypothesis case, the SPRT is optimal in terms of minimizing the expected sample size under given type $I$ and type $II$ error probability constraints. Various extensions for M-ary hypothesis testing and testing composite hypotheses were studied in \cite{Schwarz_1962_Asymptotic, Lai_1988_Nearly, Pavlov_1990_Sequential, Tartakovsky_2002_Efficient, Draglin_1999_Multihypothesis} for a single process. In these cases, asymptotically optimal performance can be obtained as the error probability approaches zero. In this paper, we focus on asymptotically optimal strategies with low computational complexity for sequential search of a target over multiple processes. Different models considered the case of searching for targets without constraints on the probing capacity, whereas all processes are probed at each given time (i.e., $K=M$, which is a special case of the setting considered in this paper) \cite{Dragalin_1996_Simple, Tartakovsky_2002_Efficient, song2016sequential, song2017asymptotically}.

Since the decision maker can choose which cells to probe, the anomaly detection problem has a connection with the classical sequential experimental design problem first studied by Chernoff \cite{Chernoff_1959_Sequential}. Compared with the classical sequential hypothesis testing pioneered by Wald~\cite{Wald_1947_Sequential} where the observation model under each hypothesis is predetermined, the sequential design of experiments has a control aspect that allows the decision maker to choose the experiment to be conducted at each time.
Chernoff has established a \emph{randomized} strategy, referred to as the Chernoff test which is asymptotically optimal as the maximum error probability diminishes. Chernoff's results were proved for a finite number of states of nature, and in\cite{albert1961sequential} Albert extended Chernoff's results to allow for an infinity of states of nature.
More variations and extensions of the problem and the Chernoff test were studied in~\cite{Bessler_1960_Theory, Nitinawarat_2013_Controlled, nitinawarat2015controlled, Naghshvar_2013_Active, naghshvar2013sequentiality, cohen2015active,deshmukh2019sequential}.
In particular, when the distributions under both normal and abnormal states are completely known under the anomaly detection setting considered here, a modification of the randomized Chernoff test applies and achieves asymptotic optimality~\cite{Nitinawarat_2013_Controlled}.
In our previous work \cite{cohen2015active}, we have shown that a simpler deterministic algorithm applies and obtains the same asymptotic performance, with better performance in the finite sample regime. A modified algorithm has been developed recently in \cite{egan2017fast} for spectrum scanning with time constraint.
In this paper, however, we consider the composite hypothesis case, which is not addressed in \cite{Nitinawarat_2013_Controlled, cohen2015active, egan2017fast}.

In \cite{vaidhiyan2015learning}, searching over Poisson point processes with unknown rates has been investigated and asymptotic optimality has been established when a single location is probed at a time. The policy in \cite{vaidhiyan2015learning} implements a randomized selection rule and also requires to dedicate a linear order of time for exploring the states of all processes. In our model, however, we consider general distributions (with disjoint parameter spaces) and show that deterministic selection rule, with bounded exploration time achieves asymptotic optimality. This result also extends a recent asymptotic result obtained in \cite{nitinawarat2017universal} for non-parametric detection when distributions are restricted to a finite observation space (in contrast to the general continuous valued observations considered here), where asymptotic optimality was shown when the distribution under the null hypothesis is known, a single location is probed at a time, and a logarithmic order of time is used for exploration. In \cite{cohen2015asymptotically}, the problem of detecting abnormal processes over densities that have an unknown parameter was considered, where the process states are independent across cells (in contrast to the problem considered in this paper, in which there is a fixed number of abnormal processes). The objective was to minimize a cost function in the system occurred by abnormal processes, which does not capture the objective of minimizing the detection delay considered here.

Another set of related works is concerned with sequential detection over multiple independent processes~\cite{Li_2009_Restless, Zhao_2010_Quickest, Lai_2011_Quickest, Malloy_2012_Quickest, Tajer_2013_Quick,  Caromi_2013_Fast, malloy2014sequential, cohen2015asymptotically, heydari2016quickest}. In particular, in~\cite{Lai_2011_Quickest}, the problem of identifying the first abnormal sequence among an infinite number of i.i.d. sequences was considered. An optimal cumulative sum (CUSUM) test has been established under this setting. Further studies on this model can be found in~\cite{Malloy_2012_Quickest, Tajer_2013_Quick, malloy2014sequential}. While the objective of finding rare events or a single target considered in~\cite{Lai_2011_Quickest, Malloy_2012_Quickest, Tajer_2013_Quick, malloy2014sequential} is similar to that of this paper, the main difference is that in~\cite{Lai_2011_Quickest, Malloy_2012_Quickest, Tajer_2013_Quick, malloy2014sequential} the search is done over an infinite number of i.i.d processes, where the state of each process (normal or abnormal) is independent of other processes, resulting in open-loop search strategies, which is fundamentally different from the setting in this paper.

Other recent studies include searching for a moving Markovian target\cite{leahy2016always}, and searching for correlation structures of Markov networks \cite{heydari2016quickest_Markov}.

Finally, we point out that our setup is different from the change point detection setup. Our model is suitable to cases where a system has already raised an alarm for event (based on change point detection, for instance), but the location of the event is unknown and needs to be located.
 
\section{System Model and Problem Statement}
\label{sec:system}
	
We consider the problem of detecting a target located in one of $M$ cells quickly and reliably. An extension to detecting multiple targets is discussed in Sec.~\ref{ssec:general}. If the target is in cell $m$, we say that hypothesis $H_m$ is true.
The \emph{a priori} probability that $H_m$ is true is denoted by $\pi_m$, where $\sum_{m=1}^{M}{\pi_m}=1$. To avoid trivial solutions, it is assumed that $0<\pi_m<1$ for all $m$.

We focus on the composite hypothesis case, where the observation distribution has an unknown parameter (or a vector of unknown parameters).
 Let $\theta_m$ be the unknown parameter that specifies the observation distribution of cell $m$. The vector of unknown parameters is denoted by $\boldsymbol{\theta} = (\theta_1 \ldots \theta_M)$.
At each time, only $K$ ($1\leq K\leq M$) cells can be observed. When cell $m$ is observed at time~$n$, an observation $y_m(n)$ is drawn independently from a common density $f\left(y|\theta_m\right)$, $\theta_m\in\Theta$, where $\Theta \subset \mathbb{R}$ is the parameter space for all cells.

If the target is not located in cell $m$, then $\theta_m \in\Theta^{(0)}$; otherwise, $\theta_m \in(\Theta\backslash\Theta^{(0)})$.
The overall parameter space is the Cartesian product $\Theta^M$.
Thus, under hypothesis $H_m$, the true vector of parameters $\boldsymbol{\theta} \in \Theta_m \subset \Theta^M$, where
\begin{center}
$\Theta_m = \{\boldsymbol{\theta}: \theta_i \in \Theta^{(0)}, \forall i \neq m,  \theta_m \in \Theta\backslash\Theta^{(0)}\}$.
\end{center}
Let $\Theta^{(0)}$, $\Theta^{(1)}$ be disjoint subsets of $\Theta$, where $I=\Theta\backslash(\Theta^{(0)}\cup\Theta^{(1)})\neq\emptyset$ is an indifference region\footnote{The assumption of an indifference region is widely used in the theory of sequential composite hypothesis testing to derive asymptotically optimal performance. Nevertheless, in some cases this assumption can be removed. For more details, the reader is referred to \cite{Lai_1988_Nearly}.}. When $\theta^{(1)}\in I$, the detector is indifferent regarding the location of the target. Hence, there are no constraints on the error probabilities for all $\theta\in I$. Shrinking $I$ increases the sample size.
We also assume that $\Theta^{(0)}$, $\Theta^{(1)}$ are open sets.
Let $\mathbf{P}_m$ be the probability measure under hypothesis $H_m$ and $\E_m$ be the operator of expectation with respect to the measure $\mathbf{P}_m$.
	
We define the stopping rule $\tau$ as the time when the decision maker finalizes the search by declaring the location of the target.
\footnote{We point out that it is assumed that the target exists with probability 1. Our model is suitable to cases where a security system has already raised an alarm for event (based on change point detection, for instance), but the location of the event is unknown and need to be located.}
Let $\delta\in\left\{1, 2, ..., M\right\}$ be a decision rule, where $\delta=m$ if the decision maker declares that $H_m$ is true. Let $\phi(n)\in\left\{1, 2, ..., M\right\}^K$ be a selection rule indicating which $K$ cells are chosen to be observed at time~$n$. The time series vector of selection rules is denoted by $\boldsymbol\phi=(\phi(n), n=1, 2, ...)$. Let $\mathbf{y}_{\phi(n)}(n)$ be the vector of observations obtained from cells $\phi(n)$ at time~$n$ and $\mathbf{y}(n)=\left\{\phi(t), \mathbf{y}_{\phi(t)}(t)\right\}_{t=1}^n$ be the set of all cell selections and observations up to time~$n$. A deterministic selection rule $\phi(n)$ at time~$n$ is a mapping from $\mathbf{y}(n-1)$ to $\left\{1, 2, ..., M\right\}^K$. A randomized selection rule $\phi(n)$ is a mapping from $\mathbf{y}(n-1)$ to a probability mass function over $\left\{1, 2, ..., M\right\}^K$. An admissible strategy $\Gamma$ for the anomaly detection problem is given by the tuple $\Gamma=(\tau, \delta, \boldsymbol\phi)$.
	
We adopt a Bayesian approach as in~\cite{Wald_1947_Sequential, Chernoff_1959_Sequential, Lai_1988_Nearly, nitinawarat2012controlled} by assigning a cost of $c$ for each observation and a loss of $1$ for a wrong declaration. Let $P_e(\Gamma)=\sum_{m=1}^{M}{\pi_m\alpha_m(\Gamma)}$ be the probability of error under strategy $\Gamma$, where $\alpha_m(\Gamma)=\mathbf{P}_m(\delta\neq m|\Gamma)$ is the probability of declaring $\delta\neq m$ when $H_m$ is true. Let $\mathbf{E}(\tau|\Gamma)=\sum_{m=1}^{M}{\pi_m\E_m(\tau|\Gamma)}$ be the average detection delay under $\Gamma$. The Bayes risk under strategy $\Gamma$ when hypothesis $H_m$ is true is given by:
$\bea{l}
\displaystyle R_m(\Gamma)\triangleq\alpha_m(\Gamma)+c\E_m(\tau|\Gamma) \;.
\ena$
Note that $c$ represents the ratio of the sampling cost to the cost of wrong detections. The average Bayes risk is given by:
\begin{center}
$\bea{l}
\displaystyle R(\Gamma)=\sum_{m=1}^{M}\pi_m R_m(\Gamma)=P_e(\Gamma)+c\mathbf{E}(\tau|\Gamma) \;.
\ena$
\end{center}
The objective is to find a strategy $\Gamma$ that minimizes the Bayes risk $R(\Gamma)$:
\beq\label{eq:Bayes_formulation1}
\displaystyle\inf_{\Gamma} \;\; R(\Gamma) \;,
\eeq
where the infimum is taken over all randomized and deterministic selection rules.
\begin{definition}
Let $R^*$ be the solution of (\ref{eq:Bayes_formulation1}). We say that strategy $\Gamma$ is asymptotically optimal if
\beq
\displaystyle\lim_{c\rightarrow 0}\frac{R(\Gamma)}{R^*}=1.
\eeq
We note that if the strategy that attains inf does not exist, the definition of the first order asymptotic optimality would be:
\beq
\displaystyle\lim_{c\rightarrow 0}\frac{R(\Gamma)}{\inf_{\Gamma} R(\Gamma)}=1.
\eeq
A shorthand notation $f\sim g$ will be used to denote $\lim_{c\rightarrow 0}f/g=1$.
\end{definition}

A dual formulation (i.e., a frequentist approach) of the problem is to minimize the sample complexity subject to an error constraint $\alpha$, i.e.,:
\beq\label{eq:Frequentist_formulation}
\displaystyle \inf_{\Gamma} \mathbb{E}_m(\tau|\Gamma) , \; \; s.t. \; \; P_e(\Gamma) \leq \alpha \; \; as \; \; \alpha \rightarrow 0
\eeq

In Section \ref{sec:DS} we develop an asymptotically optimal Deterministic Search (DS) algorithm for solving (\ref{eq:Bayes_formulation1}) and (\ref{eq:Frequentist_formulation}).

\subsection{Notations}
\label{notations}
	
We provide next notations that will be used throughout the paper. Let
\beq
\label{eq:unconstrained_MLE}
\displaystyle\hat{\theta}_m(n)\triangleq\arg\max_{\theta\in\Theta}{f\left(\mathbf{\bar{y}}_m(n)|\theta\right)}
\eeq
be the maximum likelihood estimate (MLE) of the parameter over the parameter space $\Theta$ (i.e., unconstrained MLE) at cell $m$, where $\mathbf{\bar{y}}_m(n)=\left(y_m(r_1), ..., y_m(r_{k(n)})\right)$ is the vector of $k(n)$ observations (indicated by times $r_1, ..., r_{k(n)}$) collected from cell $m$ up to time $n$. Regularity conditions for consistency of the MLE are given in App. \ref{ssec:extending}. \\
Let: 
\begin{center}
$\displaystyle\hat{\theta}_m^{(0)}(n)\triangleq\arg\max_{\theta\in\Theta^{(0)}}{f\left(\mathbf{\bar{y}}_m(n)|\theta\right)}$,
\end{center}
\begin{center}
$\displaystyle\hat{\theta}_m^{(1)}(n)\triangleq\arg\max_{\theta\in\Theta \setminus \Theta^{(0)}}{f\left(\mathbf{\bar{y}}_m(n)|\theta\right)}$
\end{center}
be the MLE for cell $m$ to be in normal or abnormal state, respectively.

Let $\mathbf{1}_m(n)$ be the indicator function, where $\mathbf{1}_m(n)=1$ if cell $m$ is observed at time~$n$, and $\mathbf{1}_m(n)=0$ otherwise. \\
We now propose two optional statistics. Let\\
\beq
\label{eq:MGLLR_L}
\displaystyle S_{m,LGLLR}^{(r)}(n) \buildrel \Delta \over = \sum\limits_{t = 1}^n {{{\bf{1}}_m}} (t)\log {{f({y_m}(t)|{{\hat \theta }_m}(n))} \over {f({y_m}(t)|{{\hat \theta_m }^{(r)}}(n))}}
\eeq
be the sum of Local Generalized Log-Likelihood Ratio (LGLLR) of cell $m$ at time~$n$ used to reject hypothesis $r$ (for $r=0, 1$) regarding its state.
We refer to the statistics as \textit{local} since it uses the observations from cell $m$ solely. In Section \ref{ssec:general} we will define a statistics measure that uses observations from multiple cells, referred to as Multi-process Generalized Log-Likelihood Ratio (MGLLR).
The LGLLR statistics is inspired by the Generalized Likelihood Ratio (GLR) statistics used for sequential tests, first studied by Schwartz \cite{Schwarz_1962_Asymptotic} for a one parameter exponential family, who assigned a cost of $c$ for each observation and a loss function for wrong decisions. A refinement was studied by Lai \cite{Lai_1988_Nearly, Lai_1994_Nearly}, who set a time-varying boundary value. Lai showed that for a multivariate exponential family this scheme asymptotically minimizes both the Bayes risk and the expected sample size subject to error constraints as $c$ approaches zero \cite{Lai_1994_Nearly}.   

The second statistics that we propose to use is obtained by replacing the parameter for the $k$th observation with the estimator $\hat \theta_m(k-1)$ built upon samples $\mathbf{\tilde{y}}_m(n)=\left(y_m(r_1), ..., y_m(r_{k-1(n)})\right)$. The statistics is given by:
\beq
\label{eq:MALLR_L}
\displaystyle S_{m,LALLR}^{(r)}(n) \buildrel \Delta \over = \sum\limits_{t = 1}^n {{{\bf{1}}_m}} (t)\log {{f({y_m}(t)|{{\hat \theta }_m}(t-1))} \over {f({y_m}(t)|{{\hat \theta_m }^{(r)}}(n))}},
\eeq
which we refer to as the sum of Local Adaptive Log Likelihood Ratio (LALLR).
The LALLR statistics is inspired by the Adaptive Likelihood Ratio (ALR) statistics used for sequential tests, first introduced by Robbins and Siegmund \cite{Robbins_1974_Expected} to design power-one sequential tests.  Pavlov used it to design asymptotically (as the error probability approaches zero) optimal (in terms of minimizing the expected sample size subject to error constraints) tests for composite hypothesis testing of the multivariate exponential family \cite{Pavlov_1990_Sequential}. Tartakovsky established asymptotically optimal performance for a more general multivariate family of distributions \cite{Tartakovsky_2002_Efficient}. \\
The advantage of using the LALLR statistics, is that it enables us to upper-bound the error probabilities of the sequential test by using simple threshold settings. Thus, implementing the LALLR is much simpler than implementing the LGLLR. The disadvantage of using the LALLR is that poor early estimates (for small number of observations) can never be revised even though one has a large number of observations. A numerical comparison for the performance of the two statistics is presented in Section \ref{ssec:experiments_amllr_gmllr}.

Finally,
\begin{center}
$D(x||z)\triangleq\mathbf{E}_{f(y(n)|x)}\left(\log\frac{f(y(n)|x)}{f(y(n)|z)}\right)$
\end{center}
denotes the Kullback–Leibler (KL) divergence between two distributions, $f(y(n)|x), f(y(n)|z)$.
	
\section{A Low-Complexity Deterministic Search (DS) Algorithm}
\label{sec:DS}

Sequential detection problems involving multiple processes are POMDP \cite{Castanon_1995_Optimal}. As a result, computing optimal search policies is intractable in general. In this section we present the Deterministic Search (DS) algorithm, which has low complexity (linear with the number of processes) used for solving the anomaly detection problem asymptotically as the error approaches zero. Both proposed statistics (LGLLR and LALLR) can be used in the implementation of the algorithm.

We start by analyzing the case where no additional side information on the process states is known in Section \ref{ssec:uncorrelated}. Then, in Section \ref{ssec:known}, we consider the case in which the parameter value under the null hypothesis is known and equal for all normal processes. In this case we show analytically the gain achieved in the detection time, by utilizing the side information on the normal state. Finally, in Section \ref{ssec:general}, we consider the case where the parameter value under the null hypothesis is unknown, but is known to be equal for all normal processes.     

\subsection{Anomaly Detection Without Side Information}
\label{ssec:uncorrelated}
We assume that $K=1$ as widely assumed in dynamic search problems for purposes of analysis (e.g., \cite{Zigangirov_1966_Problem, Klimko_1975_Optimal, Dragalin_1996_Simple, Stone_1971_Optimal, cohen2014optimal, vaidhiyan2015learning, nitinawarat2017universal}). In Section \ref{ssec:general} we discuss the implementation under more general settings. 
We also assume that the parameter space is finite, and we assume a large-scale system where $M>>1$ so that $D(\theta^{(0)}||\theta^{(1)})/D(\theta^{(1)}||\theta^{(0)})<M-1$ for all $\theta^{(0)}\in\Theta^{(0)}$, $\theta^{(1)}\in\Theta^{(1)}$.
Let $\mathcal{H}_1(n)=\left\{m\;:\; \hat{\theta}_m(n)\nin\Theta^{(0)}\right\}$ be the set of cells whose MLEs lie outside $\Theta^{(0)}$ at time $n$ with cardinality $|\mathcal{H}_1(n)|=N_{H_1}(n)$.
 Let $S_m^{(r)}(n)$ be the $S_{m,LALLR}^{(r)}(n)$ or $S_{m,LGLLR}^{(r)}(n)$ statistics defined in Section \ref{notations}. The DS algorithm has a structure of exploration and exploitation epochs. We start by addressing the Bayesian formulation, and we describe the DS algorithm with respect to time index $n$. \vspace{0.1cm}\\
\begin{enumerate}
  \item (Exploration phase:) If $N_{H_1}(n)\neq 1$, then probe the cells one by one in a round-robin manner, i.e., $\phi(n)=\left[\left(\phi(n-1)+1\right)\mod M\right]$ and go to Step $1$ again. Otherwise, go to Step $2$.
  \item (Exploitation phase:) Update $\hat{\theta}_m(n)$ for all $m=1, ..., M$, and let $\hat{m}(n)=\left\{m\;:\; \hat{\theta}_m(n)\nin\Theta^{(0)}\right\}$ be the index of the cell whose MLE lies outside $\Theta^{(0)}$ at time $n$ (note that this cell is unique at the exploitation phase). Probe cell $\phi(n)=\hat{m}(n)$ and go to Step 3.
  \item (Sequential testing:) Update $S_{\phi(n)}^{(0)}(n)$ based on the last observation. If $S_{\hat{m}(n)}^{(0)}(n)\geq -\log c$ stop the test and declare $\delta=\hat{m}(\tau)$ as the location of the target. Otherwise, go to Step 1.\vspace{0.2cm}
\end{enumerate}

Note that the selection rule constructed by Steps 1, 2 is deterministic and dynamically updated based on the current value of the MLEs.
The proposed DS algorithm is intuitively satisfying. Consider first the simple hypothesis case (where asymptotic optimality was shown in \cite{cohen2015active}), in which $\theta^{(0)}, \theta^{(1)}$ are assumed known. When $K=1$ and $M>>1$, the DS algorithm selects at each time the cell with the largest sum log likelihood ratio. The intuition behind this selection rule is that $D(\theta^{(1)}||\theta^{(0)})$ and $D(\theta^{(0)}||\theta^{(1)})/(M-1)$ determine, respectively, the rates at which the state of the cell with the target and the states of the $M-1$ cells without the target can be accurately inferred. Since $M>>1$ such that $D(\theta^{(0)}||\theta^{(1)})/D(\theta^{(1)}||\theta^{(0)})<M-1$ for all $\theta^{(0)}\in\Theta^{(0)}$, $\theta^{(1)}\in\Theta^{(1)}$, the DS algorithm aims at identifying the cell with the target (which is equivalent to probe the most likely abnormal process as implemented during the exploitation phase). When handling the composite hypothesis case and $\theta^{(1)}$ is unknown, the selection rule dedicates an exploration phase for estimating the parameter and adjusts the estimated KL divergences dynamically. Since the parameter spaces are disjoint, the exploration phase yields an estimate for the location of the abnormal process (i.e., the cell whose MLE lies outside $\Theta^{(0)}$). The exploitation phase keeps taking samples until $S_{\hat{m}(n)}^{(0)}(n)\geq -\log c$ first occurs to ensure a sufficiently accurate decision, i.e., error probability of order $O(c)$ as shown in the analysis. \\

\begin{theorem}
\label{th:DS_optimality_policy_uncorrelated}
Assume that the DS algorithm is implemented under the anomaly detection setting described in this section. Let $R^*$ and $R(\Gamma)$ be the Bayes risks under the DS algorithm and any other policy $\Gamma$, respectively. Then, the following statements hold:\vspace{0.1cm}\\
1) (\emph{Finite sample error bound:}) The error probability is upper bounded by $(M-1)c$ for all $c$.\vspace{0.1cm}\\
2) (\emph{Asymptotic optimality:}) The Bayes risk satisfies:
\begin{center}
$\bea{l}
\displaystyle R^* \;\sim\; \frac{-c\log c}{D(\theta^{(1)})}\;\sim\;\inf_{\Gamma}\;{R(\Gamma)} \;\;\; \mbox{as} \;\;\; c\rightarrow 0 \;,\vspace{0.1cm}
\ena$
\end{center}
\vspace{0.1cm} \hspace{0.3cm} where $\displaystyle D(\theta^{(1)}) \triangleq \min_{\varphi \in \Theta^{(0)}} D(\theta^{(1)}||\varphi)$. \vspace{0.1cm} \\
3) (\emph{Bounded exploration time:}) The total expected time spent during the exploration phase (i.e., Step 1 in the DS algorithm) is $O(1)$.\vspace{0.2cm}
\end{theorem}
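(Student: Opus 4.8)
The plan is to establish the three claims by separate arguments, since each calls on a different tool; throughout I fix the true hypothesis $H_m$ (target in cell $m$, with $\theta_m=\theta^{(1)}\in\Theta^{(1)}$) and write $\varphi^\star=\arg\min_{\varphi\in\Theta^{(0)}}D(\theta^{(1)}\|\varphi)$, so that $D(\theta^{(1)})=D(\theta^{(1)}\|\varphi^\star)$.

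For the finite-sample bound (part 1) I would use the LALLR statistic, whose predictable structure is what makes the error control elementary, consistent with the paper's remark that the LALLR "enables us to upper-bound the error probabilities." Take a normal cell $j\neq m$, so its samples are i.i.d.\ $f(\cdot\,|\varphi)$ with $\varphi=\theta_j\in\Theta^{(0)}$. Because $\hat\theta_j(t-1)$ depends only on the first $t-1$ samples of cell $j$, the product $W_k=\prod_{t\le k} f(y_j(r_t)|\hat\theta_j(r_{t-1}))/f(y_j(r_t)|\varphi)$ is a nonnegative $\mathbf{P}_m$-martingale of mean one. Since $\hat\theta_j^{(0)}(n)$ maximizes the likelihood over $\Theta^{(0)}$, its product likelihood dominates that of $\varphi$, whence $S_{j,LALLR}^{(0)}(n)\le\log W_{k(n)}$. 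Ville's maximal inequality then gives $\mathbf{P}_m(\sup_n S_j^{(0)}(n)\ge-\log c)\le c$. The test can declare $\delta=j$ only if $S_j^{(0)}$ crosses $-\log c$, so a union bound over the $M-1$ normal cells yields $\alpha_m\le (M-1)c$, and therefore $P_e=\sum_m\pi_m\alpha_m\le(M-1)c$ for every $c$.

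For bounded exploration (part 3) I would invoke the MLE regularity conditions of the appendix to obtain exponential concentration: since $\Theta^{(0)}$ and $\Theta\setminus\Theta^{(0)}$ are open and disjoint, the probability that a cell's unconstrained MLE is misclassified after $k$ of its own samples is at most $Ce^{-\gamma k}$ for some $\gamma>0$. The round-robin rule guarantees that whenever $N_{H_1}(n)\neq1$ the per-cell counts all advance together, and the sample counts are monotone, so re-entering exploration from a transient misclassification cannot help an already well-sampled cell. Conditioning on the minimum per-cell count equalling $k$, the event that triggers further exploration (some cell misclassified, so $N_{H_1}\neq1$) has probability at most $MCe^{-\gamma k}$; summing this geometric tail shows the expected number of exploration rounds is $O(1)$, and as a round costs $M$ steps the total expected exploration time is $O(1)$, uniformly in $c$.

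For achievability in part 2, the $O(c)$ error bound of part 1 makes $P_e$ negligible against $-c\log c$, so it suffices to show $\mathbf{E}(\tau)\sim -\log c/D(\theta^{(1)})$. After the $O(1)$ exploration the algorithm probes only cell $m$; by the strong law, $\hat\theta_m(n)\to\theta^{(1)}$ and $\hat\theta_m^{(0)}(n)\to\varphi^\star$, so the increments of $S_m^{(0)}(n)$ have asymptotic mean $D(\theta^{(1)})$. A Wald/renewal argument applied to the first crossing of the boundary $-\log c$ (nonlinear renewal theory in the LGLLR case, in the spirit of Lai, to absorb the slowly varying MLE correction) gives $\mathbf{E}[\tau]\sim -\log c/D(\theta^{(1)})$, hence $R^\ast=P_e+c\,\mathbf{E}[\tau]\sim -c\log c/D(\theta^{(1)})$.

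The converse, namely $\inf_\Gamma R(\Gamma)\gtrsim -c\log c/D(\theta^{(1)})$, is the step I expect to be the main obstacle, as it must hold for every admissible (possibly randomized, closed-loop) strategy rather than for one constructed statistic. The plan is a change-of-measure argument: to keep $\alpha_m$ small under $H_m$ one must accumulate log-likelihood at the target cell against the least favorable $\varphi\in\Theta^{(0)}$, forcing $\mathbf{E}_m[N_m]\gtrsim -\log\alpha_m/D(\theta^{(1)})$ for the number $N_m$ of samples drawn from cell $m$; since $\tau\ge N_m$, this lower-bounds $\mathbf{E}[\tau]$. Substituting into $R(\Gamma)\gtrsim \alpha - c\log\alpha/D(\theta^{(1)})$ and minimizing this scalar function at $\alpha^\star=c/D(\theta^{(1)})$ reproduces the claimed rate, and combined with $\inf_\Gamma R(\Gamma)\le R^\ast$ it sandwiches both quantities at $-c\log c/D(\theta^{(1)})$. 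The delicate point is certifying that the detection constraint at the target, and not the cost of confirming the $M-1$ normal cells, is the binding one; this is exactly where the standing assumption $M\gg1$ with $D(\theta^{(0)}\|\theta^{(1)})/D(\theta^{(1)}\|\theta^{(0)})<M-1$ enters, ensuring that establishing normality of the other cells is cheap (indeed absorbed into the $O(1)$ exploration) so that the single exponent $D(\theta^{(1)})$ governs the asymptotic risk.
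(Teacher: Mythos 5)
Your proposal is correct and, for the constructive parts, follows essentially the paper's own route: the error bound is obtained exactly as in the paper's Appendix D --- dominate $S_{j,LALLR}^{(0)}$ by the statistic with the true null parameter in the denominator, note that the adaptive (predictable) numerator makes the exponentiated statistic a nonnegative mean-one martingale, and apply Ville's inequality with a union bound over the $M-1$ normal cells --- and the exploration and delay bounds both rest on exponential concentration of the per-cell MLEs. Three differences are worth recording. First, the paper does not use (nonlinear) renewal theory for the delay: it introduces the a.s.\ finite time $\tau_{ML}$ after which every MLE is correctly classified, and then bounds $\mathbf{P}_m(n_U>n)$ for $n>-(1+\epsilon)\log c/D(\theta^{(1)})$ by elementary Chernoff bounds on the centered increments, reducing the crossing problem to a fixed-parameter random walk for each $\varphi\in\Theta^{(0)}$ (finitely many by assumption) and invoking $D(\theta^{(1)}||\varphi)\geq D(\theta^{(1)})$; this is more elementary than your renewal argument and sidesteps the slowly varying MLE correction entirely. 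Second, your exploration argument indexed by the minimum per-cell count $k$ needs one further union bound over the unboundedly many time instants at which a late misclassification could send the algorithm back to exploration; the paper's Lemma~\ref{lemma:tau_ML} supplies exactly this by proving $\mathbf{P}_m(E_i(t))\leq Ce^{-\gamma t}$ in the \emph{elapsed time} $t$ (via the fraction arguments with $r,r_2$ that relate per-cell sample counts to $t$, covering both the exploration and exploitation sub-cases) and summing over $t\geq n$. Third, and most significantly, the step you flag as the main obstacle --- the converse $\inf_\Gamma R(\Gamma)\gtrsim -c\log c/D(\theta^{(1)})$ --- is not proved in the paper at all: it is imported by citation from Chernoff's 1959 lower bound. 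Your change-of-measure sketch, including the optimization over $\alpha$ and the role of the standing assumption $D(\theta^{(0)}||\theta^{(1)})/D(\theta^{(1)}||\theta^{(0)})<M-1$ in making the target-cell constraint the binding one, is the substance behind that citation and is sound, but you should expect to either carry out that argument in full over all adaptive randomized policies or, as the authors do, cite it.
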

The proof is given in Appendix \ref{app:proof_policy_uncorrelated}.

We point out that bounded exploration time of the DS algorithm is of particular significance. It is in sharp contrast with the logarithmic order of exploration time commonly seen in active search strategies (see, for example, \cite{cohen2015asymptotically, nitinawarat2017universal}).

\subsection{Anomaly Detection under a Known Model of Normality}
\label{ssec:known}
	
Here, we assume that the parameter under null hypothesis $\theta=\theta^{(0)}\in\Theta^{(0)}$ is known, and equal for all empty cells, where $\Theta^{(0)}$ is an open set that contains $\theta^{(0)}$. This setting models many anomaly detection situations, in which the distribution of the observations under a normal state is known, while there is uncertainty in the distribution under an abnormal state. 
To utilize this information, we adjust the LALLR statistics used to reject hypothesis $H_0$ as follows: 
\beq
\label{eq:ALLR_Nomral}
\displaystyle \widetilde{S}_{m,LALLR}^{(0)}(n) \buildrel \Delta \over = \sum\limits_{t = 1}^n {{{\bf{1}}_m}} (t)\log {{f({y_m}(t)|{{\hat \theta }_m}(t-1))} \over {f({y_m}(t)|{{ \theta }^{(0)}})}}.
\eeq
We define $\widetilde{S}_{m,LGLLR}^{(0)}(n)$ similarly.

In the following theorem we establish a finite-sample upper bound on the error probability and prove asymptotic optimality of the algorithm for the Bayesian formulation using the adjusted LALLR statistics, where only $O(1)$ order of time is spent during the exploration phase. The proof is given in App. \ref{app:proof_policy1}. \vspace{0.2cm}
	
\begin{theorem}
\label{th:DS_optimality_policy1}
Assume that the DS algorithm is implemented under the anomaly detection setting described in this section, using the adjusted LALLR statistics. Let $R^*$ and $R(\Gamma)$ be the Bayes risks under the DS algorithm and any other policy $\Gamma$, respectively. Then, the following statements hold:\vspace{0.1cm}\\
1) (\emph{Finite sample error bound:}) The error probability is upper bounded by $(M-1)c$ for all $c$.\vspace{0.1cm}\\
2) (\emph{Asymptotic optimality:}) The Bayes risk satisfies:
\begin{center}
$\bea{l}
\displaystyle R^* \;\sim\; \frac{-c\log c}{D(\theta^{(1)}||\theta^{(0)})}\;\sim\;\inf_{\Gamma}\;{R(\Gamma)} \;\;\; \mbox{as} \;\;\; c\rightarrow 0 \;.\vspace{0.1cm}
\ena$
\end{center}
3) (\emph{Bounded exploration time:}) The total expected time spent during the exploration phase (i.e., Step 1 in the DS algorithm) is $O(1)$.\vspace{0.2cm}
\end{theorem}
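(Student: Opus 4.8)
The plan is to establish the three claims by adapting the proof of Theorem~\ref{th:DS_optimality_policy_uncorrelated} to the known-null setting, where the decisive simplification is that the adjusted statistic $\widetilde{S}_{m,LALLR}^{(0)}(n)$ in \eqref{eq:ALLR_Nomral} compares the running estimate against the single known point $\theta^{(0)}$ rather than against a composite null. I would treat the three statements separately. For statement~1 (finite-sample error bound), the key observation is that for any cell $i$ that is truly normal under $H_m$ (so its observations are i.i.d.\ $f(\cdot|\theta^{(0)})$), the exponentiated statistic $\Lambda_i(n)=\exp(\widetilde{S}_{i,LALLR}^{(0)}(n))$ is a nonnegative martingale with mean one, because the estimate $\hat\theta_i(t-1)$ is predictable and $\E_{f(\cdot|\theta^{(0)})}[\,f(y|\hat\theta_i(t-1))/f(y|\theta^{(0)})\mid\mathcal{F}_{t-1}\,]=1$. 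This exactness is precisely what the known $\theta^{(0)}$ buys us, and it is why the LALLR is simpler to bound than the LGLLR. Applying Ville's maximal inequality gives $\mathbf{P}_m(\sup_n \widetilde{S}_{i,LALLR}^{(0)}(n)\geq -\log c)\leq c$, and since a wrong declaration $\delta=i\neq m$ forces the threshold $-\log c$ to be crossed by a truly normal cell, a union bound over the $M-1$ normal cells yields $\alpha_m\leq (M-1)c$ and hence $P_e\leq (M-1)c$ for all $c$.

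For statement~3 (bounded exploration) I would show that the round-robin phase terminates after $O(1)$ expected rounds. Because the parameter space is finite and the MLE is consistent under the regularity conditions of Appendix~\ref{ssec:extending}, the probability that any normal cell has $\hat\theta_m(n)\notin\Theta^{(0)}$, or that the target cell has $\hat\theta_m(n)\in\Theta^{(0)}$, decays exponentially in the number of samples drawn from that cell. Summing these misclassification probabilities over exploration passes through a Borel--Cantelli / geometric-tail argument shows that the expected number of extra round-robin passes until $N_{H_1}(n)=1$ with the correct cell identified is bounded uniformly in $c$, giving $O(1)$ exploration time; this step is identical in spirit to the corresponding part of Theorem~\ref{th:DS_optimality_policy_uncorrelated}.

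For statement~2 (asymptotic optimality) I would prove matching achievability and converse bounds, both governed by $D(\theta^{(1)}||\theta^{(0)})$. For achievability, after the $O(1)$ exploration phase the algorithm probes only the correctly identified target cell $m$, and since $\hat\theta_m(t-1)\to\theta^{(1)}$ the per-sample increments of $\widetilde{S}_{m,LALLR}^{(0)}$ converge to $D(\theta^{(1)}||\theta^{(0)})$; a law-of-large-numbers / renewal argument for the first passage of this drifting walk over $-\log c$ gives $\E_m[\tau]\sim -\log c/D(\theta^{(1)}||\theta^{(0)})$, which combined with $P_e=O(c)=o(-c\log c)$ yields $R(\mathrm{DS})\sim -c\log c/D(\theta^{(1)}||\theta^{(0)})$. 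For the converse I would follow the same information-theoretic argument as in Theorem~\ref{th:DS_optimality_policy_uncorrelated}, with the composite-null divergence $\min_{\varphi\in\Theta^{(0)}}D(\theta^{(1)}||\varphi)$ replaced by the point divergence $D(\theta^{(1)}||\theta^{(0)})$, since the null is now a known single point: distinguishing $H_m$ (cell $m$ distributed as $f(\cdot|\theta^{(1)})$) from the competing scenario in which cell $m$ is normal with the known $\theta^{(0)}$ requires, by the change-of-measure lower bound on the stopped log-likelihood ratio, at least $-\log c/D(\theta^{(1)}||\theta^{(0)})$ expected samples to reach error $O(c)$. The assumption $M\gg 1$ with $D(\theta^{(0)}||\theta^{(1)})/D(\theta^{(1)}||\theta^{(0)})<M-1$ ensures that this target-cell inference, rather than ruling out the $M-1$ empty cells, is the binding constraint, so $\inf_\Gamma R(\Gamma)\gtrsim -c\log c/D(\theta^{(1)}||\theta^{(0)})$, matching achievability.

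The main obstacle I anticipate is making the achievability rate rigorous despite the increments of $\widetilde{S}_{m,LALLR}^{(0)}$ being neither independent nor identically distributed, as they depend on the evolving estimate $\hat\theta_m(t-1)$. One must control the transient regime of poor early estimates and the overshoot at the threshold, and confirm that the \emph{expected} stopping time inherits the limiting drift $D(\theta^{(1)}||\theta^{(0)})$ with uniform integrability rather than merely an almost-sure rate. Matching the converse to this exact constant is comparatively routine once the target-cell reduction is justified and the result of Theorem~\ref{th:DS_optimality_policy_uncorrelated} is invoked, so the crux of the argument is the uniform-integrability / renewal control in the achievability step.
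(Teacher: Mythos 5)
Your proposal is correct and follows essentially the same route as the paper's proof: the mean-one nonnegative-martingale/Ville (Robbins--Siegmund) bound with a union bound over the $M-1$ normal cells for Statement~1, exponential decay of MLE misclassification via Chernoff bounds to bound the last exit time $\tau_{ML}$ from exploration for Statement~3, and the decomposition $\tau\leq\tau_{ML}+n_U$ with a drift/first-passage argument for the post-$\tau_{ML}$ segment (where, in the finite parameter space, the increments become exactly i.i.d.\ with mean $D(\theta^{(1)}||\theta^{(0)})$, resolving the non-i.i.d.\ difficulty you flag) combined with the simple-hypothesis lower bound for Statement~2. The only cosmetic difference is that the paper imports the converse directly from \cite{cohen2015active} rather than rederiving the change-of-measure bound.
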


We point out that the side information on the true null hypothesis strengthens the algorithm performance. The improvement in the performance is clearly seen by the fact that $D(\theta^{(1)}||\theta^{(0)}) \geq D(\theta^{(1)})$. Hence, the risk in Theorem \ref{th:DS_optimality_policy1} is smaller then the risk in Theorem \ref{th:DS_optimality_policy_uncorrelated}. Note also that in this setting we do not restrict $\Theta^{(0)}$ to be a singleton set (the parameter still lies in an open set). The side information is utilized when constructing the statistics in (\ref{eq:ALLR_Nomral}).

For the frequentist formulation, in step 3 of the DS algorithm (i.e., sequential testing step) we define the threshold as $a$, i.e., if $S_{\hat{m}(n)}^{(0)}(n)\geq a$ we stop the test and declare $\delta=\hat{m}(\tau)$ as the location of the target. We now present Theorem \ref{th:DS_optimality_policy1_2}, which claims that the DS algorithm is first order asymptotically optimal in the sense of criterion (\ref{eq:Frequentist_formulation}). The proof is given in App.\ref{app:proof_policy1_2}. \vspace{0.2cm}

\begin{theorem}
\label{th:DS_optimality_policy1_2}
Assume that the DS algorithm is implemented under the anomaly detection setting described in this section, using the adjusted LALLR statistics. We define the class of tests \vspace{0.1cm}:
\begin{center}
$\vspace{0.1cm} C(\alpha)= \{\Gamma: P_e(\Gamma) \leq \alpha)$.
\end{center}
Let $\mathbb{E}_m(\tau|\Gamma^*)$ and $\mathbb{E}_m(\tau|\Gamma)$ be, the detection time under the DS algorithm, and any other policy, respectively.  
Then, the following statement holds for each $m=1,\ldots, M$:
\beq
\begin{split}
\displaystyle \mathbb{E}_m(\tau|\Gamma^*) \;\sim\;  \inf_{\Gamma \in C(\alpha)} \mathbb{E}_m(\tau|\Gamma) =  \frac{|\log \alpha|}{D(\theta^{(1)}||\theta^{(0)})} & (1+o(1)), \; \; \\
& as \; \;\alpha \rightarrow 0,
\end{split}
\eeq
and $\Gamma^*\in C(\alpha)$.
\end{theorem}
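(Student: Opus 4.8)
The plan is to prove the stated first-order asymptotics by establishing a matching upper bound and converse, together with a verification that the DS algorithm with a suitably chosen threshold is feasible, i.e. $\Gamma^*\in C(\alpha)$. Throughout I would fix the stopping threshold at $a=\log\frac{M-1}{\alpha}$, so that $a=|\log\alpha|(1+o(1))$ as $\alpha\to 0$.

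\textbf{Feasibility and the upper bound.} First I would verify $\Gamma^*\in C(\alpha)$, exploiting the martingale structure of the adjusted LALLR statistic in (\ref{eq:ALLR_Nomral}). For an empty cell $m$ (true parameter $\theta^{(0)}$), the process $\exp(\widetilde{S}_{m,LALLR}^{(0)}(n))$ is a nonnegative martingale of unit mean under $\mathbf{P}_m$, since $\hat\theta_m(t-1)$ is measurable with respect to the past and $\int f(y|\hat\theta_m(t-1))\,dy=1$. Ville's maximal inequality then yields $\mathbf{P}_m(\sup_n \widetilde{S}_{m}^{(0)}(n)\ge a)\le e^{-a}$, exactly as in Theorem \ref{th:DS_optimality_policy1}. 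A wrong declaration under $H_m$ forces some empty cell to cross level $a$, so a union bound over the $M-1$ empty cells gives $P_e(\Gamma^*)\le(M-1)e^{-a}=\alpha$, i.e. $\Gamma^*\in C(\alpha)$. For the detection time, the exploration phase contributes only $O(1)$ expected samples by the bounded-exploration argument of Theorem \ref{th:DS_optimality_policy1} (MLE consistency together with disjointness of $\Theta^{(0)}$ and $\Theta\setminus\Theta^{(0)}$). During exploitation the true cell $m$ is probed and, by consistency of the MLE, $\hat\theta_m(t-1)\to\theta^{(1)}$, so the per-sample drift of $\widetilde{S}_m^{(0)}$ converges to $D(\theta^{(1)}||\theta^{(0)})$. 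A renewal/Wald argument then shows the number of exploitation samples needed to reach level $a$ is $a/D(\theta^{(1)}||\theta^{(0)})(1+o(1))$, whence $\mathbb{E}_m(\tau|\Gamma^*)\le\frac{|\log\alpha|}{D(\theta^{(1)}||\theta^{(0)})}(1+o(1))$.

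\textbf{The converse.} The harder direction is to show that no policy in $C(\alpha)$ can beat this rate. Fix the true hypothesis $H_m$ and, for each $j\ne m$, consider the alternative $H_j$ in which cell $j$ carries the \emph{true} parameter $\theta^{(1)}$ while cell $m$ carries $\theta^{(0)}$. The measures $\mathbf{P}_m$ and $\mathbf{P}_j$ differ only at cells $m$ and $j$, so by Wald's identity the expected log-likelihood ratio at the stopping time equals $\mathbb{E}_m[N_m(\tau)]\,D(\theta^{(1)}||\theta^{(0)})+\mathbb{E}_m[N_j(\tau)]\,D(\theta^{(0)}||\theta^{(1)})$, where $N_k(\tau)$ counts the observations drawn from cell $k$ up to $\tau$. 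Converting the Bayesian constraint $P_e\le\alpha$ into the per-hypothesis bounds $\alpha_m\le\alpha/\pi_m=O(\alpha)$ via the fixed priors, and applying the data-processing inequality to the event $\{\delta=m\}$ with $\mathbf{P}_m(\delta=m)\ge 1-\alpha_m$ and $\mathbf{P}_j(\delta=m)\le\alpha_j$, gives for every $j\ne m$ the inequality $\mathbb{E}_m[N_m(\tau)]\,D(\theta^{(1)}||\theta^{(0)})+\mathbb{E}_m[N_j(\tau)]\,D(\theta^{(0)}||\theta^{(1)})\ge|\log\alpha|(1-o(1))$. Minimizing $\mathbb{E}_m(\tau)=\sum_k\mathbb{E}_m[N_k(\tau)]$ subject to these $M-1$ linear constraints is a small linear program whose optimum, under the standing large-scale assumption $D(\theta^{(0)}||\theta^{(1)})/D(\theta^{(1)}||\theta^{(0)})<M-1$, is attained by placing all observations on cell $m$; this yields $\mathbb{E}_m(\tau|\Gamma)\ge\frac{|\log\alpha|}{D(\theta^{(1)}||\theta^{(0)})}(1-o(1))$. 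Combining with the upper bound gives the stated $\sim$ relation.

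\textbf{Main obstacle.} The crux is the converse, and specifically the max--min (linear-programming) step: one must argue that spreading observations over the decoy cells $j\ne m$ cannot push the total sample count below $|\log\alpha|/D(\theta^{(1)}||\theta^{(0)})$. This is exactly where the condition $D(\theta^{(0)}||\theta^{(1)})/D(\theta^{(1)}||\theta^{(0)})<M-1$ enters: it renders the coefficient of $\mathbb{E}_m[N_m(\tau)]$ in the reduced objective negative, forcing the optimum to concentrate on the target cell and matching the DS allocation. Care is also needed to handle the composite nature of $H_m$ — choosing the alternative $H_j$ with the \emph{same} true $\theta^{(1)}$ keeps the change of measure exact — and to make rigorous the negligibility of the $O(1)$ exploration time and of the boundary overshoot, though these are routine refinements of the single-process sequential-testing estimates of Pavlov \cite{Pavlov_1990_Sequential} and Tartakovsky \cite{Tartakovsky_2002_Efficient}.
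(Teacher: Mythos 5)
Your proposal is correct, and its feasibility and achievability halves coincide with the paper's proof: the same threshold $a=\log\frac{M-1}{\alpha}$, the same Ville/martingale argument giving $P_e(\Gamma^*)\le (M-1)e^{-a}=\alpha$, and the same decomposition of the delay into a bounded exploration time (Lemma \ref{lemma:tau_ML}) plus an exploitation segment whose length is $a/D(\theta^{(1)}||\theta^{(0)})(1+o(1))$ — the paper makes your ``renewal/Wald argument'' precise via the last exit times $L(\epsilon,\theta)$ of (\ref{eq:exit_times}) and a Chernoff bound showing $\mathbf{E}[L(\epsilon,\theta)]<\infty$, which is the rigorous form of your drift statement and absorbs the overshoot issue you flag. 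Where you genuinely diverge is the converse. The paper disposes of it in one line by asserting that the lower bound follows ``the same steps as in'' \cite{tartakovsky2002efficient}; you instead give a self-contained controlled-sensing argument: a change of measure to the alternative $H_j$ carrying the \emph{same} $\theta^{(1)}$, Wald's identity for the stopped log-likelihood ratio, the data-processing inequality on the event $\{\delta=m\}$ to convert $P_e\le\alpha$ into the constraint $\mathbb{E}_m[N_m(\tau)]D(\theta^{(1)}||\theta^{(0)})+\mathbb{E}_m[N_j(\tau)]D(\theta^{(0)}||\theta^{(1)})\ge|\log\alpha|(1-o(1))$ for each $j\ne m$, and a linear program over the allocation $(N_k(\tau))_k$ whose optimum concentrates on cell $m$ precisely because $D(\theta^{(0)}||\theta^{(1)})/D(\theta^{(1)}||\theta^{(0)})<M-1$. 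This is in the spirit of \cite{Chernoff_1959_Sequential} and \cite{Nitinawarat_2013_Controlled} (and of the simple-hypothesis lower bound of \cite{cohen2015active} that the paper invokes for Theorem \ref{th:DS_optimality_policy1}), and it buys something the paper's citation does not make explicit: it shows exactly why the multi-cell allocation cannot beat the single-cell rate and pinpoints the role of the large-$M$ assumption, at the cost of having to verify that the per-hypothesis errors $\alpha_{j,m}$ inherited from the Bayesian constraint are $O(\alpha)$ (immediate from the fixed priors, as you note). Both routes yield the same first-order constant, so the two bounds match and the theorem follows.
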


\subsection{Anomaly Detection under Identical Parameter for All Normal Cells}
\label{ssec:general}
Next, we consider the case where both parameter values under normal and abnormal states $\theta^{(0)}$ and $\theta^{(1)}$ are unknown. However, it is known that the unknown parameter is identical for all normal cells. Therefore, under hypothesis $H_m$, the true vector of parameters satisfy $\boldsymbol{\theta} \in \Theta_m \subset \Theta^M$, where 
\begin{center}
$\Theta_m = \{\boldsymbol{\theta}: \theta_i = \theta^{(0)} \in \Theta^{(0)}, \forall i \neq m,\;  \theta_m = \theta^{(1)} \in \Theta\backslash\Theta^{(0)}\}$.
\end{center}
Note that in contrast to section \ref{ssec:uncorrelated} where observations from cell $m$ does not contribute any information about the parameter value of cell $r$, for $m\neq r$, here the additional side information allows us to estimate the true value of $\theta^{(0)}$ consistently using observations from each normal cell. Specifically, let $\mathbf{y}_{\Theta^{(0)}}(n), \mathbf{y}_{\Theta\setminus \Theta^{(0)}}(n)$ be the set of all the observations collected from the cells whose MLEs lie inside $\Theta^{(0)}$ (i.e., $\hat{\theta}_m(n)\in\Theta^{(0)}$), and inside $\Theta\setminus \Theta^{(0)}$ (i.e., $\hat{\theta}_m(n)\in\Theta\setminus\Theta^{(0)}$) at time $n$, respectively. The global MLE of $\theta^{(0)}$ is computed based on the observations from all the cells which are likely to be empty:
\begin{center}
$\bea{l}
\displaystyle\hat{\theta}^{(0)}(n)\triangleq\arg\max_{\theta\in\Theta}{f\left(\mathbf{y}_{\Theta^{(0)}}(n)|\theta\right)},
\ena$
\end{center}
where the global MLE of $\theta^{(1)}$ is computed based on the observations from all the cells which are likely to contain the target:
\begin{center}
$\bea{l}
\displaystyle\hat{\theta}^{(1)}(n)\triangleq\arg\max_{\theta\in\Theta}{f\left(\mathbf{y}_{\Theta\setminus\Theta^{(0)}}(n)|\theta\right)}.
\ena$
\end{center}
Intuitively, as more observations are collected from all cells, only the MLE at the cell that contains the target is likely to lie inside $\Theta\setminus \Theta^{(0)}$.
Next, we define the statistics accordingly. Let:
\beq
\label{eq:MGLLR}
\displaystyle S_{m,MGLLR}^{(r)}(n) \buildrel \Delta \over = \sum\limits_{t = 1}^n {{{\bf{1}}_m}} (t)\log {{f({y_m}(t)|{{\hat \theta }_m}(n))} \over {f({y_m}(t)|{{\hat \theta }^{(r)}}(n))}}
\eeq
be the sum of Multi-process Generalized Log-Likelihood ratio (MGLLR) of cell $m$ at time $n$ used to reject hypothesis $r$ (for $r = 0,1$) regarding its state.
The modified adaptive statistics is defined by:
\beq  
\label{eq:MALLR}
\displaystyle S_{m,MALLR}^{(r)}(n) \buildrel \Delta \over = \sum\limits_{t = 1}^n {{{\bf{1}}_m}} (t)\log {{f({y_m}(t)|{{\hat \theta }_m}(t-1))} \over {f({y_m}(t)|{{\hat \theta }^{(r)}}(n))}},
\eeq
which we refer to as the sum of Multi-process Adaptive Log-Likelihood Ratio (MALLR)
\footnote{Note that the adaptive LLR statistics and generalized LLR statistics used in sequential composite hypothesis testing of a single process contains a constrained MLE over the alternative parameter space in the denominator (see Section \ref{ssec:experiments_allr} for more details). Here, we use unconstrained MLEs (which are computed over the entire parameter space $\Theta$) in both numerator and denominator, depending on the cells from which the observations were taken. Thus, we refer to this statistics measure as a Multi-process Adaptive/Generalized LLR (MALLR/ MGLLR).}. \\
Let $\mathcal{N}_s=\left\{n_1, n_2, ...\right\}$ be a sequence of time instants, where $O(|\mathcal{N}_s|)$ has a logarithmic order of time, in which the cells are selected in a round-robin manner during the algorithm. Intuitively speaking, the role of $n_1, n_2, ...$, is to explore all the cells to infer the true value of $\theta^{(0)}$ (which is not observed when testing the target cell) during the algorithm. This allows us to use the estimate values of both $\theta^{(0)}$ and $\theta^{(1)}$ when computing the statistics used in the algorithm. We also define $m^{(i)}(n)$ for $i=1, 2, ..., M-1$ as the index of the cell with the $i^{th}$ smallest sum MALLR $S_j^{(1)}(n)$ for $j\neq\hat{m}(n)$ at time~$n$.
The DS algorithm has a structure of exploration and exploitation epochs. Let $S_m^{(r)}(n)$ be the statistics used in the algorithm which can be the MALLR or MGLLR statistics. Next, we describe the DS algorithm with respect to time index $n$. We describe the algorithm for the general case where multiple processes can be probed at a time ($K\geq 1$), and $D(\theta^{(0)}||\theta^{(1)})/D(\theta^{(1)}||\theta^{(0)})<M-1$ does not necessarily hold. \vspace{0.1cm}\\
\begin{enumerate}
  \item (Exploration phase 1:) Exploration phase 1 is similar to the exploration phase described in Section \ref{ssec:known}. If $N_{H_1}(n)\neq 1$, then cells are probed one by one in a round-robin manner. Otherwise, go to Step $2$.
  \item (Exploration phase 2:) If $n\in\mathcal{N}_s$, the cells are probed one by one in a round-robin manner. Otherwise, if $N_{H_1}(n)\neq 1$, go to Step $1$. Otherwise, go to Step $3$.
  \item (Exploitation phase:) Update $\hat{\theta}_m(n)$ for all $m=1, ..., M$, and let $\hat{m}(n)=\left\{m\;:\; \hat{\theta}_m(n)\nin\Theta^{(0)}\right\}$ be the index of the cell whose MLE lies outside $\Theta^{(0)}$ at time $n$ (note that this cell is unique at the exploitation phase).
      Then, probe cells $\phi(n)$ which are given by:\footnote{Assume that $K<M$. Otherwise, all cells are probed.}
	\beq
	\label{eq:selection_policy2}
	\displaystyle \phi(n)=
	\begin{cases} \left(\hat{m}(n), m^{(1)}(n), m^{(2)}(n), ..., m^{(K-1)}(n)\right) \;, \vspace{0.2cm}\\ \hspace{0.5cm}
		\mbox{if \;$D(\hat{\theta}^{(1)}(n)||\hat{\theta}^{(0)}(n))\geq\frac{D(\hat{\theta}^{(0)}(n)||\hat{\theta}^{(1)}(n))}{(M-1)}$ \vspace{0.2cm}}\\\mbox{ \hspace{2.0cm}
			and $n\nin\left\{\mbox{exploration times}\right\}$
		}
		\vspace{0.0cm}\\
		\left(m^{(1)}(n), m^{(2)}(n), ..., m^{(K)}(n)\right)\;, \vspace{0.2cm}\\ \hspace{0.5cm}
		\mbox{if \;$D(\hat{\theta}^{(1)}(n)||\hat{\theta}^{(0)}(n))<\frac{D(\hat{\theta}^{(0)}(n)||\hat{\theta}^{(1)}(n))}{(M-1)}$ \vspace{0.2cm}}\\\mbox{ \hspace{2.0cm}
			and $n\nin\left\{\mbox{exploration times}\right\}$
		}
	\end{cases}
	\eeq
and go to Step 4.
  \item (Sequential testing:) Update the sum MALLRs based on the last observations. If $S_{\hat{m}(n)}^{(0)}(n)+S_{m^{(1)}(n)}^{(1)}(n)\geq -\log c$ stop the test and declare $\delta=\hat{m}(\tau)$ as the location of the target. Otherwise, go to Step 1. \vspace{0.2cm}
\end{enumerate}

The proposed DS algorithm under the general setting is intuitively satisfying. Since both $\theta^{(0)}, \theta^{(1)}$ might be unknown, the selection rule dedicates exploration phases $1, 2$ for estimating the parameters and adjusts the estimated KL divergences dynamically. Since the parameter spaces are disjoint, exploration phase $1$ yields an estimate for the location of the abnormal process (i.e., the cell whose MLE lies outside $\Theta^{(0)}$). The exploitation phase keeps taking samples until $S_{\hat{m}(n)}^{(0)}(n)+S_{m^{(1)}(n)}^{(1)}(n)\geq -\log c$ first occurs, i.e., to ensure a sufficiently accurate decision. We show in the appendix that this stopping rule achieves error probability of order $O(c)$ when the parameters are known under both normal and abnormal states, and polynomial decay with time is achieved under the general composite hypothesis testing setting (though only consistency can be shown, where asymptotic optimality still remains open in the general setting), which motivates the design of the stopping rule.

In the theorem below, we prove the consistency of the DS algorithm using the MALLR statistics. The proof and regularity assumptions are given in Appendix \ref{app:proof_policy4}. \vspace{0.2cm}

\begin{theorem}
\label{th:DS_generelized_policy}
Assume that the DS algorithm is implemented under the anomaly detection setting described in this section. Assume also that the parameters $\theta^{(0)}, \theta^{(1)}$ can take a finite number of values (where the observations are still continuous). Let $H_m$ be true hypothesis. Then, $\hat{m}(n)\rightarrow m$ as $c\rightarrow 0$, and the error probability decays polynomially with $-\log c$.\vspace{0.2cm}
\end{theorem}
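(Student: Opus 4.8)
The plan is to establish the two assertions separately: consistency of the cell-classification rule (so that $\hat{m}(n)=m$ with probability approaching one under $H_m$), and a polynomial-in-$(-\log c)$ bound on the declaration error. Both rest on two structural facts. First, the parameter values are drawn from a \emph{finite} set, so classifying a cell reduces to a test among finitely many simple hypotheses, amenable to union bounds and Chernoff-type large-deviation estimates. Second, the two exploration phases guarantee that every cell is probed round-robin often enough that its sample count grows without bound as $c\rightarrow 0$; this is what renders the pooled global MLEs $\hat{\theta}^{(0)}(n)$ and $\hat{\theta}^{(1)}(n)$ consistent, which is the essential new difficulty relative to Section~\ref{ssec:known} where $\theta^{(0)}$ was known.

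First, consistency of $\hat{m}(n)$. Fix the true hypothesis $H_m$, so cell $m$ has parameter $\theta^{(1)}\in\Theta^{(1)}$ and every other cell has $\theta^{(0)}\in\Theta^{(0)}$. Exploration phase $2$ forces a round-robin sweep at each time in $\mathcal{N}_s$, and since $|\mathcal{N}_s|\rightarrow\infty$ the number of observations $k_j(n)$ collected from every cell $j$ diverges. Invoking the MLE regularity conditions of Appendix~\ref{ssec:extending}, each local estimate $\hat{\theta}_j(n)$ converges to the true parameter of cell $j$; because $\Theta^{(0)}$ and $\Theta^{(1)}$ are disjoint open sets separated by the indifference region, with probability tending to one we have $\hat{\theta}_j(n)\in\Theta^{(0)}$ for all $j\neq m$ and $\hat{\theta}_m(n)\in\Theta\setminus\Theta^{(0)}$. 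Hence $N_{H_1}(n)=1$ and $\hat{m}(n)=m$, which gives the first claim and, together with the positive drifts $D(\theta^{(1)}||\theta^{(0)})$ and $D(\theta^{(0)}||\theta^{(1)})$ of the two terms in the stopping statistic, shows the test terminates with $\tau=O(-\log c)$.

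Second, the error bound. Union-bounding over wrong declarations gives $\mathbf{P}_m(\delta\neq m)\le\sum_{j\neq m}\mathbf{P}_m(\delta=j)$. For the algorithm to declare $j\neq m$ it must hold at the stopping time that $\hat{m}(\tau)=j$, i.e. the normal cell $j$ has its MLE pushed outside $\Theta^{(0)}$. Exploiting finiteness of the parameter set, I would bound this misclassification event by a union over the finitely many abnormal values $\vartheta\in\Theta\setminus\Theta^{(0)}$ of the event that the samples from cell $j$ favor $\vartheta$ over $\theta^{(0)}$, each a large-deviation event decaying like $\exp(-k_j(\tau)\,\rho)$ with $\rho>0$ a Chernoff exponent between $\theta^{(0)}$ and the nearest abnormal value. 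The exploration schedule supplies the crucial lower bound $k_j(\tau)\gtrsim |\mathcal{N}_s(\tau)|/M$; since $|\mathcal{N}_s|$ is of logarithmic order in time and $\tau=O(-\log c)$, we obtain $k_j(\tau)\gtrsim\log(-\log c)$, so each term is of order $(-\log c)^{-\rho/M}$, i.e. polynomial decay in $-\log c$. Complementarily, I would verify that the stopping rule cannot fire prematurely on a misclassified configuration: when $\hat{m}(n)=j$ is a truly normal cell, the MALLR $S_{j}^{(0)}(n)$ has asymptotically zero per-sample drift, since both its numerator $\hat{\theta}_j(t-1)$ and denominator $\hat{\theta}^{(0)}(n)$ track $\theta^{(0)}$ through consistent MLEs and the increment tends to $D(\theta^{(0)}||\theta^{(0)})=0$; hence it cannot reach the growing threshold $-\log c$ before the classification self-corrects.

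The main obstacle, and the reason only consistency rather than asymptotic optimality is obtained, is the tension in sample allocation: exploitation concentrates almost all samples on the single suspected cell, so off-target cells accrue only the logarithmically many exploration samples. This is just barely enough to drive the global MLE $\hat{\theta}^{(0)}(n)$ and each local classifier to consistency, yet it caps the misclassification exponent at logarithmic sample growth, which is exactly what downgrades an exponential-in-samples large-deviation bound to merely polynomial decay in $-\log c$. The delicate part is coupling the random stopping time, the adaptive data-dependent selection rule, and the consistency of the pooled estimate $\hat{\theta}^{(0)}(n)$, all correlated through the shared observation history; I would handle this by conditioning on the high-probability event that, once the exploration samples have accumulated, every MLE has entered a small neighborhood of its true parameter, and by controlling the residual probability off that event with the large-deviation estimates above.
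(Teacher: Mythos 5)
Your overall skeleton---condition on the high-probability event that all local and pooled MLEs have settled at their true values, handle the residual event by Chernoff-type large deviations over the finite parameter set, and argue that the stopping statistic cannot cross the threshold while a normal cell is misclassified---matches the paper's strategy, and your consistency argument for $\hat{m}(n)$ is essentially the paper's $\tau_{ML}$ analysis. However, two steps in your error bound do not go through as written. First, your large-deviation bound on misclassifying a normal cell $j$ needs a \emph{lower} bound on the sample count $k_j(\tau)$ at the stopping time, and you derive it from ``$|\mathcal{N}_s|$ is of logarithmic order in time and $\tau=O(-\log c)$.'' But $\tau=O(-\log c)$ is an \emph{upper} bound on $\tau$, hence an upper bound on $|\mathcal{N}_s(\tau)|$; it says nothing against the event that the test fires very early, with only a handful of samples from cell $j$, on a fluke of the statistic. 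The paper closes exactly this hole by imposing, for purposes of analysis, that the stopping rule is not allowed to terminate before $-\epsilon\log c$ samples have been taken; with that floor in place the bad event is contained in $\{\tau_{ML}>-\epsilon\log c\}$, whose probability is $C(-\epsilon\log c)^{-\gamma}$, and this is where the polynomial rate actually comes from. Your proposal neither imposes this modification nor supplies an alternative argument ruling out premature stopping.

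Second, your claim that the statistic ``cannot reach the growing threshold'' because $S_j^{(0)}(n)$ has asymptotically zero per-sample drift is a heuristic, not a proof: bounding the probability that an adaptive, data-dependent sum ever exceeds the level $-\log c$ requires a maximal inequality, not a drift computation. The paper replaces this step by a change of measure: on the event $\{\tau_{ML}\le\tau\}$, declaring $j$ forces $S_j^{(0)}(n)+S_{m^{(1)}(n)}^{(1)}(n)\ge-\log c$, hence $S_j^{(0)}(n)+S_m^{(1)}(n)\ge-\log c$ (since $m$ is among the candidates over which $m^{(1)}(n)$ minimizes), and the Wald/martingale identity for the adaptive likelihood ratio---the same device used in Step 1 of the proof of Theorem~\ref{th:DS_optimality_policy1}---bounds this probability by $c$, which is far smaller than the polynomial term. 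If you want to salvage your route, substitute that martingale argument for the zero-drift heuristic and add the minimum-sample-size requirement; the rest of your outline then assembles into the paper's proof.
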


It should be noted that the expected detection time is of order $O(-\log c)$. Therefore, Theorem \ref{th:DS_generelized_policy} implies that the error probability decays polynomially with the expected detection time. We point out that establishing asymptotic optimality for $K>1$ remains open. In this case, at each time slot the statistics is based on a mixed of samples from cells that contain the target and from cells that do not contain the target. As a result, bounding the error probability by $O(c)$ while achieving the asymptotically optimal detection time is much more complex. \\
In Figure \ref{fig:sim1} we present simulation results, demonstrating strong performance of the DS algorithm under the setting considered in this section.
The sum MALLRs use the exact values of $\theta^{(0)}, \theta^{(1)}$ when they are known, and the MLEs of $\theta^{(0)}, \theta^{(1)}$ when they are unknown. Although theoretical asymptotic optimality remains open when $\theta^{(0)}, \theta^{(1)}$ are unknown (and $\theta^{(0)}$ is identical for all normal cells), it can be seen by simulations that the DS algorithm nearly achieves asymptotically optimal performance in this case as well (since it approaches the performance of the DS algorithm when $\theta^{(0)}, \theta^{(1)}$ are known).\vspace{0.2cm}
	
\begin{figure}[htbp]
\centering \epsfig{file=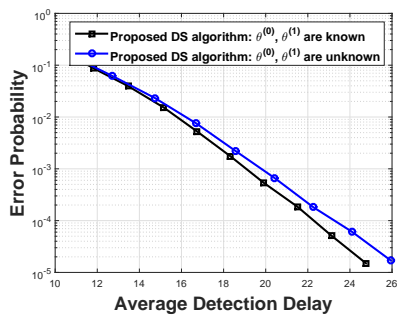,
width=0.35\textwidth}
\caption{The error probability as a function of the average detection delay under the proposed DS algorithm. A case of Laplace distributions with parameters $\theta^{(0)} = 0, \theta^{(1)} = 1$ under normal and abnormal states, respectively, with $K = 2, M = 5$. We averaged over $10^6$ Monte Carlo runs.}
\label{fig:sim1}
\end{figure}

\begin{remark}
Note that in Sections \ref{ssec:uncorrelated} and \ref{ssec:known} the exploitation phase collects observations from cell $\hat{m}(n)$. As a result, a sufficiently accurate MLE for $\theta^{(1)}$ is computed based on observations collected during the exploitation phase, while exploration phase $2$ is unnecessary.  In the setting considered in this section, however, exploration phase $2$ is required to guarantee a sufficiently accurate estimation of the unknown parameter $\theta^{(0)}$. Specifically, let $N_O(t)$ denote the number of observations that have been collected in exploration phase 2, and let $\tau^{(0)}_{ML}$ be the smallest integer such that $\hat{\theta}^{(0)}(n) = \theta^{(0)}$ for all $n>\tau^{(0)}_{ML}$. 
Then, exploring cells such that $N_O(t)>\frac{2}{I_0} \log(t)$ is met for all $t$ is sufficient to ensure consistency of $\hat{\theta}^{(0)}(n)$, where
\beq
\label{eq:Exploration_rate}
\vspace{0.1cm} \displaystyle I_0 \triangleq \inf_{\theta^{(0)}, \theta \in \Theta : \theta^{(0)}\neq \theta } \sup_{s>0} \left\{ -\log\mathbf{E}_{\sim f(y|\theta^{(0)})}\left[e^{s(-\ell^{(\theta^{(0)}, \theta)})}\right] \right\}
\eeq
is the Legendre-Fenchel transformation of \vspace{0.2cm}
\begin{center}
$\vspace{0.1cm} \displaystyle \ell^{(\theta^{(0)},\theta)} = \log \frac{f(y|\theta^{(0)})}{f(y|\theta)}$. 
\end{center}
Below, we prove the statement (under hypothesis $H_m$ w.l.o.g.): \vspace{0.1cm} \\
$\vspace{0.2cm} \hspace{0.0cm} \displaystyle \mathbf{P}_m \left(\tau_{ML}^{(0)}>n\right) \leq \sum_{t=n}^{\infty} 
\mathbf{P}_m \left(\hat{\theta}^{(0)} (t) \neq \theta^{(0)} \right).$\\
By the definition of $\hat{\theta}^{(0)}(n)$, the event $\hat{\theta}^{(0)}(t) \neq \theta^{(0)}$ implies: 
\beq
\displaystyle\sum_{i=1}^{t}\ell^{(\theta^{(0)}, \widetilde{\theta}(t))}(i)<0,
\eeq
for some $\widetilde{\theta}(t)\neq\theta^{(0)}$, where the index $i$ refers to measurement $i$ taken from cells which are likely to be empty. Since the expected last exit time (say $t'$) from exploration phase 1 is bounded (see Appendix \ref{app:proof_policy1}), applying the Chernoff bound for all $t>t'$ and using the i.i.d property yields: \beq
\bea{l}
\vspace{0.2cm} \hspace{0.0cm} \displaystyle\mathbf{P}_m\left(\sum_{i=1}^{t}\ell^{(\theta^{(0)}, \widetilde{\theta}(t))}(i)<0\right)\\ 
\vspace{0.2cm} \hspace{0.0cm} \displaystyle\leq\displaystyle \min_{s>0} \left\{\mathbf{E}_{\sim f(y|\theta^{(0)})}\left[e^{s(-\ell^{(\theta^{(0)}, \widetilde{\theta}(t))}(i))}\right]\right\}^{N_O(t)} \\
\vspace{0.2cm} \hspace{0.0cm} \displaystyle = \min_{s>0} \left\{e^{-N_O(t)\left(-\log\mathbf{E}_{\sim f(y|\theta^{(0)})}\left[e^{s(-\ell^{(\theta^{(0)}, \widetilde{\theta}(t))}(i))}\right] \right) }  \right\} \\
\vspace{0.0cm} \hspace{0.0cm} \displaystyle = e^{-N_O(t)\left(\sup_{s>0} \left\{ -\log\mathbf{E}_{\sim f(y|\theta^{(0)})}\left[e^{s(-\ell^{(\theta^{(0)}, \widetilde{\theta}(t))}(i))}\right] \right\} \right) }.
\ena
\eeq
Since $ N_O(t) > \frac{2}{I_0} \log(t)$, $\mathbf{E}_{\sim f(y|\theta^{(0)})}[\tau^{(0)}_{ML}] = O(1)$ is satisfied.\vspace{0.2cm}
\end{remark}

\begin{remark}
It should be noted that the proposed DS algorithm can be extended to handle multiple (say $L$) abnormal processes as well. The exploration phase can be implemented in a similar manner until exactly $L$ MLEs lie outside $\Theta^{(0)}$. The exploitation phase will prioritize processes which are likely to be abnormal if the conditions on the first line of (\ref{eq:selection_policy2}) hold. Otherwise, it will prioritize processes which are likely to be normal if the conditions on the second line of (\ref{eq:selection_policy2}) hold. The test terminates once all the abnormal processes are distinguished from the rest $M-L$ normal processes, i.e., when the $L^{th}$ highest sum MALLR among the processes which are likely to be abnormal plus the smallest sum MALLR among the processes which are likely to be normal is greater than $-\log c$.
\end{remark}

\subsection{Comparison with Chernoff's test}
\label{ssec:Chernoff and Albert}
In this section, we discuss the differences between our problem and the classical sequential experimental design problem studied by Chernoff, first presented in\cite{Chernoff_1959_Sequential}. While we presented a deterministic algorithm search, Chernoff  proposed a test with a randomized selection rule. Specifically, let $q=(q_1, ..., q_N)$ be a probability mass function over a set of $N$ available experiments $u=\left\{u_i\right\}_{i=1}^N$ that the decision maker can choose from, where $q_i$ is the probability of choosing experiment $u_i$. For a general M-ary sequential design of experiments problem, the action at time~$n$ under the Chernoff test is drawn from a distribution $q^*(n)=(q^*_1(n), ..., q^*_N(n))$ that depends on the past actions and observations:
	\beq
	\label{eq:selection_Chernoff}
	\displaystyle q^*(n)=\arg\;\max_{q}\;\min_{j\in\mathcal{M}\setminus\left\{\hat{i}(n)\right\}}
	\sum_{u_i}q_i D(p_{\hat{i}(n)}^{u_i}||p_j^{u_i})\;,
	\eeq
	where $\mathcal{M}$ is the set of the $M$ hypotheses, $\hat{i}(n)$ is the MLE of the true hypothesis at time~$n$ based on past actions and observations, and $p_j^{u_i}$ is the observation distribution under hypothesis $j$ when action $u_i$ is taken.\\
Chernoff's results were proved only for a finite number of states of nature (set of possible parameters). Albert \cite{albert1961sequential} extended Chernoff's results to allow for an infinity of states of nature.
Beyond the differences in the deterministic versus randomized selection rules, we will now discuss in details the connection with the model considered by Chernoff and Albert. 
(i) \emph{Violating the positivity assumption on the KL divergence:} The asymptotic optimality of the Chernoff test as shown in~\cite{Chernoff_1959_Sequential, albert1961sequential} requires that under any experiment, any pair of hypotheses are distinguishable (i.e., has positive KL divergence). This assumption does not hold in the anomaly detection settings considered in this paper. For instance, under the experiment of searching the $i^{th}$ cell, the hypotheses of the target being in the $j^{th}$ ($j\neq i$) and the $k^{th}$ ($k\neq i$) cells yield the same observation distribution. In~\cite{Nitinawarat_2013_Controlled}, the authors relaxed this assumption, and developed a modified Chernoff test in order to handle indistinguishable hypotheses under some (but not all) actions. The basic idea of the modified test is to implement an exploration phase with a uniform distribution for a subsequence of time instants that grows logarithmically with time. Although asymptotic optimality was proved under the modeified Chernoff test, its exploration time is unbounded, and affects the finite-time performance. Nevertheless, in this paper we have shown that the DS algorithm achieves asymptotic optimality under both settings in Sections \ref{ssec:uncorrelated}, \ref{ssec:known}, using a bounded exploration time.
\emph{(ii) Utilizing the side information in the anomaly detection setting:} The model in \cite{Chernoff_1959_Sequential, albert1961sequential} can be embedded to the model in Section \ref{ssec:uncorrelated} (with the extension in \cite{Nitinawarat_2013_Controlled} as discussed earlier). This embedding does not contain side information on the parameter values under different hypotheses. The analysis in \cite{Chernoff_1959_Sequential, albert1961sequential} relies on rejecting the alternative hypothesis with respect to the closest alternative. Indeed, the DS algorithm achieves the same asymptotic optimality as in \cite{Chernoff_1959_Sequential}, but with deterministic selection rule, with better finite-time performance as demonstrated in the simulation results. The asymptotically optimal Bayes risk is given in this case by $\sim-c\log c/ \inf_{\varphi \in \Theta^{(0)}} D(\theta^{(1)} || \varphi)$ which matches with the asymptotically optimal performance in \cite{Chernoff_1959_Sequential, albert1961sequential}. 
Asymptotic optimality of the Chernoff test is achieved under the model setting in Section \ref{ssec:known} by embedding the parameter set under $\theta^{(0)}$ to a singleton, and thus the same asymptotic performance can be achieved, where the asymptotically optimal Bayes risk is given in this case by $\sim-c\log c/ D(\theta^{(1)} || \theta^{(0)})$. Indeed, we have shown that the DS algorithm achieves the same asymptotic optimality as in \cite{Chernoff_1959_Sequential, albert1961sequential} in this case. 
However, asymptotic optimality under the Chernoff test remains open in the setting considered in Section \ref{ssec:general}, since it cannot be embedded as in Section \ref{ssec:known}. The asymptotic analysis in \cite{Chernoff_1959_Sequential, albert1961sequential} is established with respect to the entire parameter space (as in Section \ref{ssec:uncorrelated}), while the lower bound on the risk must be developed with respect to the true parameter values that satisfy the side information. Nevertheless, intuitively, one can expect to improve performance by estimating the parameter $\theta^{(0)}$ consistently and improve the detection performance by approaching the performance in Section \ref{ssec:known}. We indeed showed that the DS algorithm achieves consistency in this setting.

Despite the differences between the two models, we extended the randomized Chernoff test for the anomaly detection problem over composite hypotheses as follows.   
We select cells from a uniform distribution at exploration phase until only a single MLE lies outside $\Theta^{(0)}$. Then, the solution of (\ref{eq:selection_Chernoff}) is executed in the exploitation phase.
The randomized test in (\ref{eq:selection_Chernoff}) chooses, at each time, a probability distribution that governs
the selection of the experiment to be carried out at this time. This distribution is obtained by solving a maximin problem so
that the next observation will best differentiate the current MLE of the true hypothesis from its closest alternative, where the distance is measured by the KL divergence. 
It can be shown that when applied to the anomaly detection problem, the solution of (\ref{eq:selection_Chernoff}) works as follows. Consider for example the setting in Section \ref{ssec:known} (i.e., when the parameter under the null hypothesis in known). When $D(\hat{\theta}^{(1)}(n)||\theta^{(0)})\geq\frac{D(\theta^{(0)}||\hat{\theta}^{(1)}(n))}{(M-1)}$, the Chernoff test selects cell $\hat{m}(n)$ and draws the rest $K-1$ cells randomly with equal probability from the remaining $M-1$ cells. When $D(\hat{\theta}^{(1)}(n)||\theta^{(0)})<\frac{D(\theta^{(0)}||\hat{\theta}^{(1)}(n))}{(M-1)}$, all $K$ cells are drawn randomly with equal probability from cells $\{m^{(1)}(n),m^{(2)}(n),\ldots, m^{(M-1)}(n)\}$. The same selection rule is obtained when setting the alternative hypothesis according to the settings in Sections \ref{ssec:uncorrelated}, \ref{ssec:general}. We refer to this policy as the modified Chernoff test.
We present numerical examples to illustrate the performance of the proposed deterministic policy as compared to the randomized Chernoff test, under the setting considered in Section \ref{ssec:general}. 
It can be seen in Figures \ref{fig:si3m3}, \ref{fig:si3m4}, that the proposed deterministic DS algorithm significantly outperforms the randomized Chernoff test.

\begin{figure}[htbp]
	\centering \epsfig{file=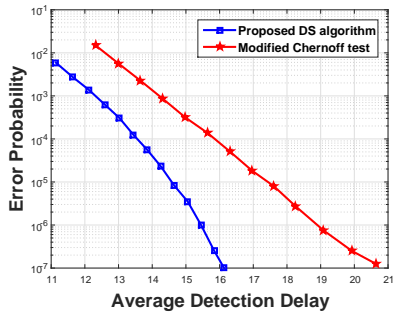,
		width=0.35\textwidth}
	\caption{
		The error probability as a function of the average detection delay under various algorithms: (i) The proposed DS algorithm that uses the MALLR statistics (referred to as the proposed DS algorithm); and (ii) The modified randomized Chernoff test as described in Section \ref{ssec:Chernoff and Albert}. A case of exponential distributions with parameters $\theta ^ {(0)} = 1, \theta ^ {(1)} = 10$ under normal and abnormal states, respectively, where $M = 15$ and $K = 5$. We averaged over $4\cdot 10^7$ Monte Carlo runs.}
	\label{fig:si3m3}
\end{figure}

\begin{figure}[htbp]
	\centering \epsfig{file=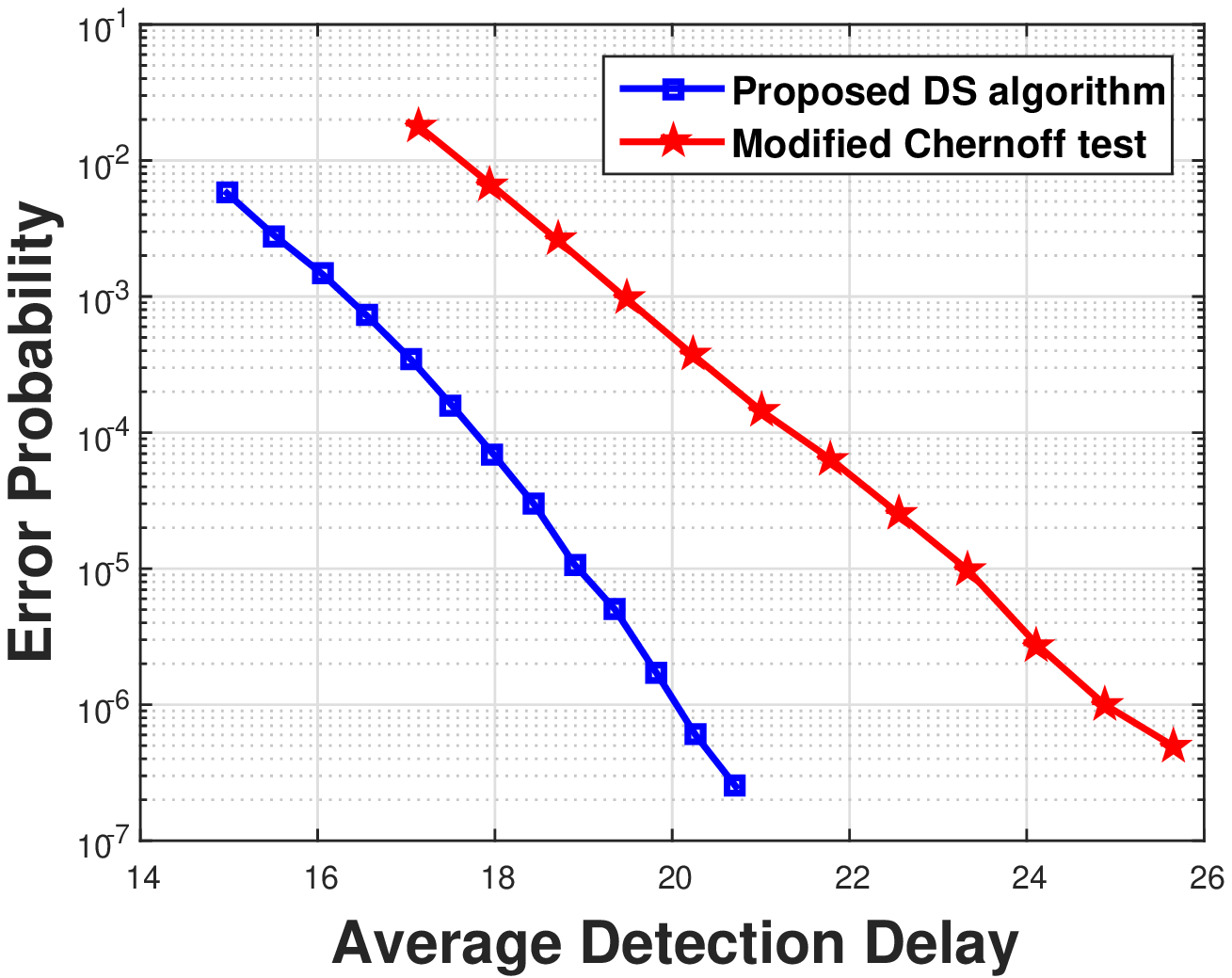,
		width=0.35\textwidth}
	\caption{
		The error probability as a function of the average detection delay under various algorithms: (i) The proposed DS algorithm that uses the MALLR statistics (referred to as the proposed DS algorithm); and (ii) The modified randomized Chernoff test as described in Section \ref{ssec:Chernoff and Albert}. A case of exponential distributions with parameters $\theta ^ {(0)} = 1, \theta ^ {(1)} = 10$ under normal and abnormal states, respectively, where $M = 20$ and $K = 5$. We averaged over $4\cdot 10^7$ Monte Carlo runs.}
	\label{fig:si3m4}
\end{figure}

\section{Empirical Studies}
\label{sec:experiemnts}

In this section, we present additional numerical experiments\footnote{The indifference region in the simulations was set to 
$\displaystyle I= [\frac{\theta^{(0)}+\theta^{(1)}}{2}-10^{-3},\frac{\theta^{(0)}+\theta^{(1)}}{2}+10^{-3}] $. We ran Monte-Carlo experiments for generating the simulation results.}
for demonstrating the performance of the proposed DS algorithm as compared to existing methods. 

\subsection{Comparison between MALLR and LALLR statistics}
\label{ssec:experiments_allr}

We first compare the proposed DS algorithm under the settings of Section \ref{ssec:general}, using the MALLR statistics defined in (\ref{eq:MALLR}) and the LALLR statistics defined in (\ref{eq:MALLR_L}), which is a popular method for performing sequential composite hypothesis testing, first introduced by Robbins and Siegmund in \cite{robbins1972class} (variations can be found in \cite{robbins1974expected, Pavlov_1990_Sequential, Tartakovsky_2002_Efficient}).
It can be seen that the proposed DS algorithm using the MALLR statistics adopts a variation of the LALLR statistics in the design of the stopping rule for anomaly detection over multiple composite hypotheses. However, since both empty cells and the cell that contains the target can be observed by the decision maker, the unconstrained MLEs of the unknown parameters $\theta^{(0)}$ and $\theta^{(1)}$ can be applied in both numerator and denominator (which we referred to as MALLR). 
We next simulate the case of searching for a target over processes that follow Laplace distributions with unknown means, where the observations $y_j$ are drawn from distribution $f\left(y_j|\theta\right)=\frac{1}{2}\exp \left\{|y-\theta|\right\}$.
We note that by using the global MLE we expect for better performance. The simulation results demonstrate the performance gain that we can get in this setting.   
It can be seen in Figure \ref{fig:sim2} that implementing the DS algorithm with MALLR statistics as proposed in Section \ref{ssec:general} significantly outperforms an algorithm that uses the selection rule of DS algorithm with the LALLR statistics as proposed in Section \ref{ssec:uncorrelated}.
It can be seen that the error exponent is significantly better when using the MALLR statistics in the algorithm design. Thus, the performance gain by using the proposed DS algorithm is expected to further increase as the error decreases. 

\begin{figure}[htbp]
\centering \epsfig{file=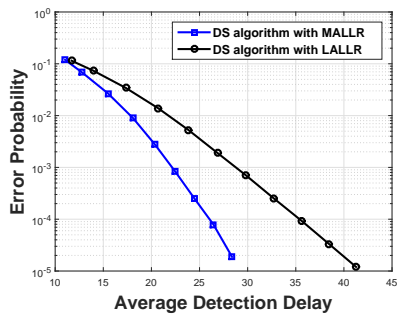,
width=0.35\textwidth}
\caption{The error probability as a function of the average detection delay. Performance comparison between the following algorithms: (i) The proposed DS algorithm that uses the MALLR statistics as described in Section \ref{ssec:general} (referred to as the DS selection rule with MALLR); and (ii) The proposed DS algorithm that uses the LALLR statistics as described in Section \ref{ssec:uncorrelated}. A case of Laplace distributions with parameters $\theta^{(0)}=0$, $\theta^{(1)}=1$, under normal and abnormal states, respectively, with $K=2$, $M=5$. We averaged over $10^6$ Monte Carlo runs.}
\label{fig:sim2}
\end{figure}
	
\subsection{Comparison between MALLR and MGLLR}
\label{ssec:experiments_amllr_gmllr}
In Figure \ref{fig:amllr_gmllr}, we compare the performance of the two proposed statistics suggested in section \ref{ssec:general}. As discussed earlier, using the MALLR statistics allows us to establish asymptotic optimality theoretically, whereas asymptotic optimality remains open when using the MGLLR. However, in practice, we expect that using the MGLLR will perform better since it uses all samples when updating the MLE. It can be seen in Figure \ref{fig:amllr_gmllr} that the DS algorithm using the MGLLR statistics slightly outperforms the DS algorithm using the MALLR.         
\begin{figure}[htbp]
	\centering \epsfig{file=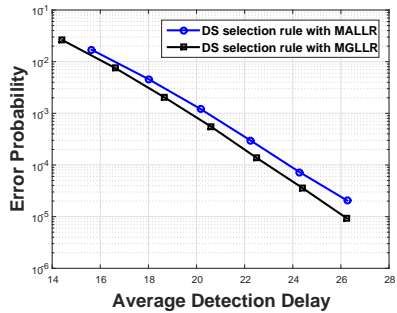,
		width=0.35\textwidth}
	\caption{The error probability as a function of the average detection delay. Performance comparison between the following algorithms: (i) The proposed DS algorithm that uses the MALLR statistics as described in Section \ref{ssec:general} (referred to as the DS selection rule with MALLR); and (ii) The proposed DS algorithm that uses the MGLLR statistics. A case of Laplace distributions with parameters $\theta^{(0)}=0$, $\theta^{(1)}=1$, under normal and abnormal states, respectively, with $K=2$, $M=5$. We averaged over $10^6$ Monte Carlo runs.}
	\label{fig:amllr_gmllr}
\end{figure}

\subsection{Network Traffic Analysis}
\label{ssec:experiments_network_traffic}
Finally, we demonstrate the performance of the DS algorithm using the MALLR statistics in intrusion detection applications, by detecting statistical deviations in network traffic. We examine anomaly detection in packet size statistics, which has been mostly investigated using open loop strategies for detecting malicious activity. We use the model in \cite{Thatte_2011_Parametric} that proposed a sample entropy for packet-size modeling and demonstrated strong performance in detecting anomalous data using the GLR statistics in the sequential detection test. Specifically,
for a given interval, let $\mathcal{S}$ be the set of packet size values that have arrived in this interval, and let $q^{(i)}$ be the proportion of number of packets of size $i$ to the total number of packets that have arrived in that interval. The sample entropy $y$ is thus computed as
$y=-\sum\limits_{i\in\mathcal{S}} q^{(i)}\log q^{(i)}$.
The sample entropy is modeled by Gaussian distribution and given by:
\begin{center}
$p\left( y|\mu, \sigma \right) = {1 \over {\sqrt {2\pi {\sigma ^2}} }}\exp \left[ { - {1 \over {2{\sigma ^2}}}{{\left( {y - \mu } \right)}^2}} \right]$,
\end{center}
where ${\theta ^{\left( 0 \right)}} = \left( {{\mu _0}, {\sigma _0}} \right)$, and ${\theta ^{\left( 1 \right)}} = \left( {{\mu _1}, {\sigma _1}} \right)$, under normal state, or abnormal state, respectively.
We simulated a network with $M$ flows of data, in which a single flow is abnormal. We used the DARPA intrusion detection data set \cite{darpa_dataset}, which contains 5-million labeled network connections, for generating the normal and abnormal flows. When testing the algorithms, the sample entropy has been learned online from the data. We implemented both the proposed DS algorithm, and the entropy-based algorithm with the GLR statistics that has been proposed in \cite{Thatte_2011_Parametric}.
We set the thresholds so that both algorithms satisfy error probability ${10^{ - 4}}$. It can be seen in Figure \ref{fig:realdata} that the DS algorithm achieves strong performance and significantly outperforms the entropy-based algorithm with the GLR statistics.
\begin{figure}[htbp]
\centering \epsfig{file=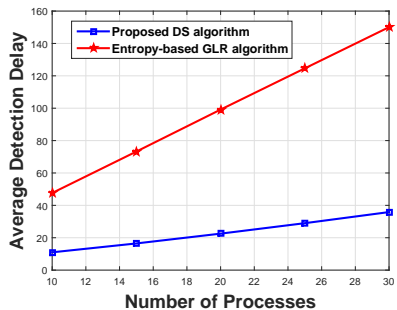,
width=0.35\textwidth}
\caption{The average detection delay as a function of the number of processes $M$ using the DARPA intrusion detection dataset. Performance comparison between the following algorithms: (i) The proposed DS algorithm that uses the MALLR statistics as described in Section \ref{ssec:general} (referred to as the proposed DS algorithm); and (ii) a policy that applies an open loop selection rule when probing cells and uses the GLR statistics for the packet size modeling in the stopping rule as proposed in \cite{Thatte_2011_Parametric} (referred to as entropy-based GLR algorithm). We averaged over $10^5$ Monte Carlo runs.}
\label{fig:realdata}
\end{figure}

\section{Conclusion}
We considered the problem of searching for anomalies among $M$ processes (i.e., cells). The observations follow a common distribution with an unknown parameter, belonging to disjoint parameter spaces depending on whether the target is absent or present. The decision maker is allowed to probe a subset of the cells at a time and the objective is a sequential search strategy that minimizes the expected detection time subject to an error probability constraint. We have developed a deterministic search algorithm to solve the problem that enjoys the following properties.
First, when no additional side information on the process states is known, the proposed algorithm was shown to be asymptotically optimal. Second, when the parameter value under the null hypothesis is known and equal for all normal processes, asymptotic optimality was shown as well, with better detection time determined by the true null state. Third, when the parameter value under the null hypothesis is unknown, but is known to be equal for all normal processes, consistency was shown in terms of achieving error probability that decays to zero with the detection delay. Finally, an explicit upper bound on the error probability under the proposed algorithm was established under the finite sample regime.
Extensive experiments have demonstrated the efficiency of the algorithm over existing methods. \\

\section{Acknowledgment}
We would like to thank the anonymous reviewers for comments that significantly improved the technical results and presentation of this paper.
	
\section{Appendix}
\label{app}
For purposes of presentation, we start by proving Theorem \ref{th:DS_optimality_policy1}. Then, we focus on the key steps for extending the results to the other models presented in Section \ref{sec:DS}. 
\subsection{Proof of Theorem~\ref{th:DS_optimality_policy1}}
\label{app:proof_policy1}

Without loss of generality we prove the theorem when hypothesis $m$ is true. For simplifying the presentation, we start with proving the theorem when the parameter space is finite, so that $\theta^{(0)}, \theta^{(1)}$ can take a finite number of values (but the measurements can still be continuous). We will then extend the proof for continuous parameter space under mild regularity conditions. 
The proof is derived using the adjusted LALLR statistics defined in (\ref{eq:ALLR_Nomral}), i.e.,
$S_m^{(0)}(n)=\sum\limits_{t = 1}^n {{{\bf{1}}_m}} (t)\log {{f({y_m}(t)|{{\hat \theta }_m}(t-1))} \over {f({y_m}(t)|{{ \theta }^{(0)}})}}$. \\

\emph{Step 1: Bounding the error probability:}\vspace{0.1cm}\\
\noindent
We first prove the upper bound on the error probability for all $c$. Specifically, we show below that the error probability is upper bounded by:
\beq
\label{eq:Pe_bound_policy1}
P_e=\sum_{m=1}^{M}\pi_m \alpha_m\leq (M-1)c \;.
\eeq
Let
\begin{center}
$\alpha_{m,j}=\mathbf{P}_m(\delta=j)$
\end{center}
for all $j\neq m$.
Thus,
\begin{center}
$\displaystyle\alpha_m=\sum_{j\neq m}\alpha_{m,j}$.
\end{center}
Therefore, we need to show that $\alpha_{m,j}\leq c$ for proving (\ref{eq:Pe_bound_policy1}). Note that $\alpha_{m,j}$ can be rewritten as follows:
\beq
\label{eq:1}
\bea{l}
\displaystyle\alpha_{m,j}=\mathbf{P}_m\left(\delta=j\right) \vspace{0.1cm} \\\hspace{1.5cm}
=\displaystyle\mathbf{P}_m\left(S_j^{(0)}(\tau)\geq-\log c \mbox{\; for some $\tau\geq 1$}\right)\vspace{0.1cm} \\\hspace{1.5cm}
\leq\displaystyle\mathbf{P}_m\left(Z\left(N_j(\tau)\right)\geq\frac{1}{c} \mbox{\; for some $N_j(\tau)\geq 1$}\right),
\ena
\eeq
where
\beq
\displaystyle Z\left(N_j(\tau)\right)\triangleq e^{ S_j^{(0)}(\tau)}=\prod_{i=1}^{N_j(\tau)}\frac{f(y_j(r_i)|\hat{\theta}_j{(r_i)})}{f(y_j(r_i)|\theta^{(0)})},
\eeq
and $r_1, ..., r_{N_j(\tau)}$ are the time indices in which observations are taken from cell $j$. Next, note that $Z\left(N_j(\tau)\right)$ is a nonnegative martingale,
\beq
\bea{l}
\displaystyle \E_{\theta^{(0)}}\left[Z\left(N_j(\tau)\right)|\left\{y_j(r_i)\right\}_{i=1}^{N_j(\tau)-1}\right] \vspace{0.1cm}\\
\displaystyle =Z\left(N_j(\tau)-1\right)\E_{\theta^{(0)}}\left[\frac{f(y_j(r_{N_j(\tau)})|\hat{\theta}_j{(r_{N_j(\tau)})})}{f(y_j(r_{N_j(\tau)})|\theta^{(0)})}\right]
\vspace{0.1cm}\\\hspace{4cm}
\displaystyle
=Z\left(N_j(\tau)-1\right).
\ena
\eeq
Therefore, applying Lemma 1 in \cite{robbins1972class} for nonnegative martingales yields:
\beq
\bea{l}
\displaystyle\displaystyle\mathbf{P}_m\left(Z\left(N_j(\tau)\right)\geq\frac{1}{c} \mbox{\; for some $N_j(\tau)\geq 1$}\right)\vspace{0.1cm} \\\hspace{5cm}
\leq c\E_{\theta^{(0)}}\left[Z\left(1\right)\right].
\ena
\eeq
Finally, since $\E_{\theta^{(0)}}\left[Z\left(1\right)\right]=1$, we have $\alpha_{m,j}\leq c$, which completes Statement $1$ of the theorem.
\vspace{0.1cm}
	
Next, we define the following major event:
\begin{definition}
$\tau_{ML}$ is the smallest integer such that $\hat{\theta}_m(n)=\theta^{(1)}$, and $\hat{\theta}_j(n)=\theta^{(0)}$ for all $j\neq m$ for all $n>\tau_{ML}$, when $H_m$ is the true hypothesis. \vspace{0.1cm}
\end{definition}
	
\begin{remark}
Note that for all $n>\tau_{ML}$ only the exploitation phase is implemented. As a result, the time spent during the round-robin exploration phase is upper bounded by $\tau_{ML}$. In the next step of the proof we show that $\tau_{ML}$ is bounded, which also yields Statement 3 of the theorem. It should be noted that $\tau_{ML}$ is not a stopping time. The decision maker does not know whether it has arrived. However, it is used to upper bound the actual stopping time under the DS algorithm.
\vspace{0.1cm}
\end{remark}

\begin{remark}
For evaluating the detection time under the DS algorithm, we analyze the case where the \emph{DS algorithm is implemented indefinitely}. When we say that the DS algorithm is implemented indefinitely we mean that we probe the cells as described by the DS algorithm, while disregarding the stopping rule. This analysis enables us to upper bound the actual detection time when the stopping rule is applied.
\vspace{0.1cm}
\end{remark}

\emph{Step 2: Bounding $\tau_{ML}$:}\vspace{0.1cm}
\begin{lemma}
\label{lemma:tau_ML}
Assume that the DS algorithm is implemented indefinitely. Then, there exist $C>0$ and $\gamma>0$ such that
\beq
\label{eq:lemma:T_ML}
\displaystyle\mathbf{P}_m\left(\tau_{ML}>n\right)\leq Ce^{-\gamma n}\;. \vspace{0.4cm}
\eeq
\end{lemma}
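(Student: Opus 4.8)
The plan is to reduce the event $\{\tau_{ML}>n\}$ to a countable union over later times and then bound, for each fixed $t$, the probability that \emph{any} MLE is incorrect at time $t$ by a quantity decaying exponentially in $t$; summing the resulting geometric series gives the claimed $Ce^{-\gamma n}$. Since $\tau_{ML}$ is the last time at which some $\hat{\theta}_j(\cdot)$ differs from its true value, we have $\{\tau_{ML}>n\}\subseteq\bigcup_{t>n}A_t$, where $A_t=\{\hat{\theta}_m(t)\in\Theta^{(0)}\ \text{or}\ \hat{\theta}_j(t)\notin\Theta^{(0)}\ \text{for some}\ j\neq m\}$. Hence it suffices to prove $\mathbf{P}_m(A_t)\leq C'e^{-\gamma' t}$, and a union bound over the $M$ cells reduces this to controlling the misclassification probability of a single cell.

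The first ingredient I would record is that the observations actually collected from a given cell $j$ form an i.i.d.\ sequence from that cell's true law, independent of the (past-measurable) times at which they are gathered; equivalently, one may pre-generate an i.i.d.\ stream per cell and reveal its $k$-th entry on the $k$-th probing of that cell. Consequently, conditioned on $N_j(t)=s$, the MLE $\hat{\theta}_j(t)$ is distributed as the MLE of $s$ i.i.d.\ samples, and since the parameter space is finite with strictly positive KL divergence between distinct values, a Chernoff bound yields $\mathbf{P}(\text{MLE of }s\text{ samples is incorrect})\leq|\Theta|\,e^{-\gamma_0 s}$ for some $\gamma_0>0$. The whole difficulty is thus to upgrade this exponential decay in the sample count $N_j(t)$ into exponential decay in $t$, which requires controlling the closed-loop, data-dependent selection rule.

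The mechanism I would exploit is a \emph{self-correcting} property of Steps~1--2: whenever a cell is classified abnormal at time $n$ (its unconstrained MLE lies outside $\Theta^{(0)}$), it must be probed again within the next $M$ steps. Indeed, either it is the unique abnormal-looking cell, so $N_{H_1}(n)=1$ and exploitation probes it, or at least two cells look abnormal, so $N_{H_1}(n)\neq1$ and the round-robin reaches it within $M$ steps; because an MLE is frozen between consecutive probings of its cell, a misclassified cell cannot be indefinitely skipped. From this I would argue that the target cell $m$ accumulates a linear number of samples, $N_m(t)\geq\kappa t$, outside an event of exponentially small probability: the only epochs in which $m$ is not probed are round-robin steps (which still probe $m$ once per $M$ steps) and ``wrong'' exploitation epochs on some normal cell $k\neq m$, and the latter require a large deviation on $k$'s samples, are self-correcting by the property above, and hence have exponentially light total duration. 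Given $N_m(t)\geq\kappa t$, the fixed-sample Chernoff bound controls $\mathbf{P}_m(\hat{\theta}_m(t)\in\Theta^{(0)})$.

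The main obstacle is the corresponding bound for a normal cell $j\neq m$, because such a cell may legitimately be \emph{starved} of samples while the algorithm correctly exploits $m$; there a naive sum $\sum_s|\Theta|e^{-\gamma_0 s}$ over the random $N_j(t)$ is only $O(1)$ and useless. The resolution I would pursue is that during a correct exploitation phase on $m$ the cell $j$ is frozen and classified normal, so a starved $j$ cannot be misclassified, and leaving such a phase requires $m$ itself to be misclassified (the only MLE that changes during exploitation is $\hat{\theta}_m$), which is exponentially rare. Turning ``recently resampled'' into ``resampled $\Omega(t)$ times on the event $\{\hat{\theta}_j(t)\notin\Theta^{(0)}\}$'' is the delicate coupling step and the crux of the proof. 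Once $\mathbf{P}_m(A_t)\leq C'e^{-\gamma' t}$ is established, summing over $t>n$ gives $\mathbf{P}_m(\tau_{ML}>n)\leq Ce^{-\gamma n}$, which also yields $\E_m[\tau_{ML}]=O(1)$ and hence Statement~3. Finally, the continuous-parameter case is handled by replacing the finite-grid Chernoff bound with a uniform large-deviation estimate over $\Theta$ under the MLE regularity conditions, using the indifference region to keep the alternative parameters bounded away from the truth.
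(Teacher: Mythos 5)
Your skeleton matches the paper's opening moves: you write $\{\tau_{ML}>n\}\subseteq\bigcup_{t>n}A_t$, reduce to a per-$t$ misclassification bound, and control each cell's MLE by a Chernoff bound that is exponential in that cell's \emph{sample count}. You also correctly isolate the crux (upgrading decay in $N_j(t)$ to decay in $t$ under the closed-loop rule), and your two structural observations --- a cell whose MLE sits outside $\Theta^{(0)}$ keeps getting probed, and a starved normal cell is frozen inside $\Theta^{(0)}$ during correct exploitation --- are true and implicitly present in the paper. But the mechanism you propose for the crux is left genuinely open. The assertion that wrong-exploitation epochs have ``exponentially light total duration'' does not follow from each epoch being individually light: there can be many short epochs, and bounding their \emph{total} length is precisely the difficulty you later flag as ``the delicate coupling step.'' Likewise, the linear-sampling claim $N_m(t)\geq\kappa t$ w.h.p.\ is stronger than needed and is not delivered by the self-correcting property alone, since a cell can be misclassified at a moment when it has accumulated only $O(1)$ samples; turning that into an excursion/recursion argument is exactly the bookkeeping you did not supply.

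The paper closes this gap without either claim, via a two-level pigeonhole on where the $n$ time slots went, and you should adopt it. Fix $0<r,r_2<1$. Either at least $rn$ slots were round-robin, in which case \emph{every} cell has at least $rn/M$ samples and any misclassification at time $n$ is exponentially unlikely by your fixed-sample Chernoff bound; or at least $(1-r)n$ slots were exploitation, and then either cell $m$ received at least $r_2(1-r)n$ of them, or some $j\neq m$ received at least $(1-r_2)(1-r)n/(M-1)$ of them. In that last branch, the mere fact that $j$ was being exploited at its last such probe means $\hat\theta_j\notin\Theta^{(0)}$ at a moment when $j$ already held $\Omega(n)$ samples --- again exponentially unlikely. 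Every branch thus reduces to ``some cell had $\Omega(n)$ samples and a wrong MLE at that moment,'' which the per-cell Chernoff bound kills directly, with no bound on the number or total length of wrong-exploitation phases and no linear-growth claim for $N_m(t)$. (The paper runs this partition separately for the event that the offending time is a round-robin instant and for the event that $\hat\theta_m$ goes wrong during an exploitation phase that began correctly; that is where the algorithm-state information you invoke actually enters.) Your closing remarks --- summing to get $\mathbf{E}_m[\tau_{ML}]=O(1)$, hence Statement~3, and extending to continuous $\Theta$ by an $\epsilon$-neighborhood version of $\tau_{ML}$ with a uniform (indeed only polynomial) deviation estimate for the MLE --- agree with the paper.
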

%
\begin{proof}
Note that event $\tau_{ML}>n$ implies one of the following events:
(i) There exists a time instant $t>n$ at the round-robin exploration phase, in which $\hat{\theta}_m(t)\neq\theta^{(1)}$, or $\hat{\theta}_j(t)\neq\theta^{(0)}$ for some $j\neq m$. When such time $t$ occurs we say that $E_1(t)$ occurs.
(ii) At the beginning of an exploitation phase (say at time $n'$) $\hat{\theta}_m(n')=\theta^{(1)}$, and $\hat{\theta}_j(n')=\theta^{(0)}$ for all $j\neq m$. However, there exists a time instant $t>n$ during the exploitation phase, in which $\hat{\theta}_m(t)\neq\theta^{(1)}$. When such time $t$ occurs we say that $E_2(t)$ occurs.
		
We can rewrite (\ref{eq:lemma:T_ML}) as follows:
\beq
\bea{l}
\label{eq:Pr_E1_E2}
\displaystyle\mathbf{P}_m\left(\tau_{ML}>n\right)\leq\mathbf{P}_m\left(E_1(t) \mbox{\;occurs for some $t\geq n$}\right) \vspace{0.1cm}\\\hspace{3cm}
\displaystyle+\mathbf{P}_m\left(E_2(t) \mbox{\;occurs for some $t\geq n$}\right) \vspace{0.1cm}\\\hspace{1cm}
\displaystyle\leq\sum_{t=n}^{\infty}\mathbf{P}_m\left(E_1(t) \mbox{\;occurs}\right) 
+\sum_{t=n}^{\infty}\mathbf{P}_m\left(E_2(t) \mbox{\;occurs}\right) \;. \vspace{0.1cm}
\ena
\eeq

Next, we upper bound the first term on the RHS of (\ref{eq:Pr_E1_E2}). It suffices to show that there exist $C>0$ and $\gamma>0$ such that $\mathbf{P}_m\left(E_1(n) \mbox{\;occurs}\right)<Ce^{-\gamma n}$. Let $N_{RR}(n)$ be the total number of time instants spent during the round-robin exploration phase up to time $n$, and fix $0<r<1$. Then, $\mathbf{P}_m\left(E_1(n) \mbox{\;occurs}\right)$ can be rewritten as follows:
\beq
\bea{l}
\label{eq:Pr_E1}
\displaystyle\mathbf{P}_m\left(E_1(n) \mbox{\;occurs}\right)=\mathbf{P}_m\left(E_1(n) \mbox{\;occurs}, N_{RR}(n)\geq rn\right)\vspace{0.1cm}\\\hspace{3cm}
+\mathbf{P}_m\left(E_1(n) \mbox{\;occurs}, N_{RR}(n)<rn\right).
\ena
\eeq
We first upper bound the first term on the RHS of (\ref{eq:Pr_E1}). Since that more than $rn$ observations were taken in a round-robin manner, then at least $rn/M$ observations were taken from each cell. Thus,
\beq
\bea{l}
\label{eq:Pr_E1_first}
\displaystyle\mathbf{P}_m\left(E_1(n) \mbox{\;occurs}, N_{RR}(n)\geq rn\right)\vspace{0.1cm}\\\hspace{1cm}
\displaystyle\leq\mathbf{P}_m\left(\hat{\theta}_m(n)\neq\theta^{(1)}, N_m(n)\geq rn/M\right) \vspace{0.1cm}\\\hspace{1.5cm}
\displaystyle+\sum_{j\neq m}\mathbf{P}_m\left(\hat{\theta}_j(n)\neq\theta^{(0)}, N_j(n)\geq rn/M\right).
\ena
\eeq
Next, we show that the first term on the RHS of (\ref{eq:Pr_E1_first}) decreases exponentially with $n$. Let $\left(y_m(r_1), ..., y_m(r_{N_m(n)})\right)$
be the vector of all $N_m(n)$ observations (indicated by times $r_1, ..., r_{N_m(n)}$) collected from cell $m$ up to time $n$, and let $\tilde{\theta}_m(n')=\hat{\theta}_m(n)$ denotes the MLE based on $N_m(n)=n'$ observations collected from cell $m$ up to time $n$.
We can upper bound $\mathbf{P}_m\left(\hat{\theta}_m(n)\neq\theta^{(1)}, N_m(n)\geq rn/M\right)$ by:
\beq
\bea{l}
\label{eq:Pr_E1_first_first}
\displaystyle\mathbf{P}_m\left(\hat{\theta}_m(n)\neq\theta^{(1)}, N_m(n)\geq rn/M\right)\vspace{0.1cm}\\
\displaystyle\leq\sum_{q=\lceil rn/M\rceil}^{\infty}\mathbf{P}_m\left(\tilde{\theta}_m(q)\neq\theta^{(1)}\right).
\ena
\eeq
Then, by the definition of the MLE (\ref{eq:unconstrained_MLE}), the event $\tilde{\theta}_m(n)\neq\theta^{(1)}$ implies:
\beq
\displaystyle\sum_{i=1}^{n}\ell_{\theta^{(1)}, \tilde{\theta}_m(n)}(i)<0,
\eeq
for some $\tilde{\theta}_m(n)\neq\theta^{(1)}$, where
\begin{center}
$\displaystyle\ell_{\theta^{(1)}, \tilde{\theta}_m(n)}(i)
\triangleq\log\frac{f\left(y_m(i)|\theta^{(1)}\right)}{f\left(y_m(i)|\tilde{\theta}_m(n)\right)}$.
\end{center}
Note that we only refer to the number of observations irrespective of the probing times due to i.i.d. property.
Hence, it remains to show that $\mathbf{P}_m\left(\sum_{i=1}^{n}\ell_{\theta^{(1)}, \tilde{\theta}_m(n)}(i)<0\right)$ decreases exponentially with $n$ for each $\tilde{\theta}_m(n)\neq\theta^{(1)}$. Applying the Chernoff bound and using the i.i.d. property yields:
\beq
\bea{l}
\displaystyle\mathbf{P}_m\left(\sum_{i=1}^{n}\ell_{\theta^{(1)}, \tilde{\theta}_m(n)}(i)<0\right)\vspace{0.1cm}\\\hspace{3cm}
\displaystyle\leq\displaystyle\left[\mathbf{E}_m\left(e^{s(-\ell_{\theta^{(1)}, \tilde{\theta}_m(n)}(i))}\right)\right]^{n}.
\ena
\eeq
Note that a moment generating function (MGF) is equal to one at $s=0$. Furthermore, since $\mathbf{E}_m(-\ell_{\theta^{(1)}, \tilde{\theta}_m(n)}(i))=-D\left(\theta^{(1)}||\tilde{\theta}_m(n)\right)<0$ is strictly negative, differentiating the MGFs of $-\ell_{\theta^{(1)}, \tilde{\theta}_m(n)}(i)$ with respect to $s$ yields a strictly negative derivative at $s=0$. Hence, there exist $s>0$ and $\gamma'>0$ such that $\mathbf{E}_m\left(e^{s(-\ell_{\theta^{(1)}, \tilde{\theta}_m(n)}(i))}\right)$ is strictly less than $e^{-\gamma'}<1$, which yields the desired exponential decay. A similar argument applies for showing that the second term on the RHS of (\ref{eq:Pr_E1_first}) decreases exponentially with $n$.
		
Next, we upper bound the second term on the RHS of (\ref{eq:Pr_E1}). Let $N_{XT}$, $N_{XT,j}$ be the total number of time instants spent during the exploitation phase up to time $n$ at all cells and cell $j$, respectively. Since $N_{RR}(n)<rn$, then $N_{XT}\geq (1-r)n$. Fix $0<r_2<1$. We can rewrite the second term on the RHS of (\ref{eq:Pr_E1}) as follows:
\beq
\bea{l}
\label{eq:Pr_E1_second}
\displaystyle\mathbf{P}_m\left(E_1(n) \mbox{\;occurs}, N_{RR}(n)<rn\right)\vspace{0.1cm}\\\hspace{0cm}
\displaystyle\leq\mathbf{P}_m\left(E_1(n) \mbox{\;occurs}, N_{XT}(n)\geq(1-r)n, \right.\vspace{0.1cm}\\\hspace{4cm}\displaystyle\left.
N_{XT,m}(n)\geq r_2(1-r)n\right)\vspace{0.1cm}\\\hspace{0cm}
\displaystyle+\mathbf{P}_m\left(E_1(n) \mbox{\;occurs}, N_{XT}(n)\geq(1-r)n, \right.\vspace{0.1cm}\\\hspace{4cm}\displaystyle\left.
N_{XT,m}(n)<r_2(1-r)n\right).
\ena
\eeq
We first upper bound the first term on the RHS of (\ref{eq:Pr_E1_second}). Note that $E_1(n)$ occurs implies that there exists an exploitation time $t$ before time $n$, in which cell $m$ has been probed, its MLE was computed based on more than $r_2(1-r)n$ observations, and error event was occurred, $\hat{\theta}_m(t)\neq\theta^{(1)}$ (so that the algorithm moved back to exploration phase and $E_1(n)$ occurred).
Therefore, we can write:
\beq
\bea{l}
\label{eq:Pr_E1_second_first}
\displaystyle\mathbf{P}_m\left(E_1(n) \mbox{\;occurs}, N_{XT}(n)\geq(1-r)n, \right.\vspace{0.1cm}\\\hspace{4cm}\displaystyle\left.
N_{XT,m}(n)\geq r_2(1-r)n\right)\vspace{0.1cm}\\\hspace{0cm}
\displaystyle\leq\sum_{t=\lceil r_2(1-r)n\rceil}^{\infty}\mathbf{P}_m\left(\hat{\theta}_m(t)\neq\theta^{(1)}, N_m(t)\geq r_2(1-r)n\right).
\ena
\eeq
By a similar argument as we developed when proving (\ref{eq:Pr_E1_first}), each of the terms in the summation decreases exponentially with $n$, implying exponentially decreasing of the first term on the RHS of (\ref{eq:Pr_E1_second}).
		
Next, we upper bound the second term on the RHS of (\ref{eq:Pr_E1_second}). Since less than $r_2(1-r)n$ observations were taken from cell $m$ during exploitation and the total number of observations during exploitation is more than $(1-r)n$, then there exists cell $j\neq m$ that have been observed more than $(1-r_2)(1-r)n/(M-1)$ times during exploitation phase. This implies that there exists an exploitation time $t$ before time $n$, in which cell $j$ has been probed, its MLE was computed based on more than $(1-r_2)(1-r)n/(M-1)=qn$ observations, where $0<q\triangleq(1-r_2)(1-r)/(M-1)<1$, and $\hat{\theta}_j(t)\neq\theta^{(0)}$
Therefore, we can write:
\beq
\bea{l}
\label{eq:Pr_E1_second_second}
\displaystyle\mathbf{P}_m\left(E_1(n) \mbox{\;occurs}, N_{XT}(n)\geq(1-r)n, \right.\vspace{0.1cm}\\\hspace{4cm}\displaystyle\left.
N_{XT,m}(n)< r_2(1-r)n\right)\vspace{0.1cm}\\\hspace{0cm}
\displaystyle\leq\sum_{t=\lceil qn\rceil}^{\infty}\mathbf{P}_m\left(\hat{\theta}_j(t)\neq\theta^{(0)}, N_j(t)\geq qn\right).
\ena
\eeq
By a similar argument as we developed when proving (\ref{eq:Pr_E1_first}), each of the terms in the summation decreases exponentially with $n$, implying exponentially decreasing of the second term on the RHS of (\ref{eq:Pr_E1_second}). Therefore, we have shown exponentially decreasing of the first term on the RHS of (\ref{eq:Pr_E1_E2}).\vspace{0.1cm}
		
It remains to show an exponentially decreasing of the second term on the RHS of (\ref{eq:Pr_E1_E2}). It suffices to show that there exist $C>0$ and $\gamma>0$ such that $\mathbf{P}_m\left(E_2(n) \mbox{\;occurs}\right)<Ce^{-\gamma n}$. Fix $0<r<1$. We can rewrite $\mathbf{P}_m\left(E_2(n) \mbox{\;occurs}\right)$ as follows:
\beq
\bea{l}
\label{eq:Pr_E2}
\displaystyle\mathbf{P}_m\left(E_2(n) \mbox{\;occurs}\right)\leq\mathbf{P}_m\left(E_2(n) \mbox{\;occurs}, N_m(n)\geq rn\right)\vspace{0.1cm}\\\hspace{3cm}
+\mathbf{P}_m\left(E_2(n) \mbox{\;occurs}, N_m(n)<rn\right).
\ena
\eeq
We first upper bound the first term on the RHS of (\ref{eq:Pr_E2}). Since $E_2(n)$ occurs and more than $rn$ observations were taken from cell $m$ we have:
\beq
\bea{l}
\label{eq:Pr_E2_first}
\displaystyle\mathbf{P}_m\left(E_2(n) \mbox{\;occurs}, N_m(n)\geq rn\right)\vspace{0.1cm}\\\hspace{1cm}
\displaystyle\leq\mathbf{P}_m\left(\hat{\theta}_m(n)\neq\theta^{(1)}, N_m(n)\geq rn\right).
\ena
\eeq
By a similar argument as we developed when proving (\ref{eq:Pr_E1_first}), the RHS of (\ref{eq:Pr_E2_first}) decreases exponentially with $n$.
		
Next, we upper bound the second term on the RHS of (\ref{eq:Pr_E2}). Since $N_m(n)<rn$ then at least $(1-r)n$ observations were taken from other cells. Let $\tilde{N}_{RR}(n)$ be the total number of observations collected from all cells excepts cell $m$ during the round-robin exploration phase up to time $n$, and fix $0<r_2<1$. Then, the second term on the RHS of (\ref{eq:Pr_E2}) can be rewritten as follows:
\beq
\bea{l}
\label{eq:Pr_E2_second_N}
\displaystyle\mathbf{P}_m\left(E_2(n) \mbox{\;occurs}, N_m(n)<rn\right)\vspace{0.1cm}\\
\displaystyle=\mathbf{P}_m\left(E_2(n) \mbox{\;occurs}, N_m(n)<rn, \tilde{N}_{RR}(n)\geq r_2(1-r)n\right)\vspace{0.1cm}\\\hspace{0cm}
\displaystyle+\mathbf{P}_m\left(E_2(n) \mbox{\;occurs}, N_m(n)<rn, \tilde{N}_{RR}(n)<r_2(1-r)n\right).
\ena
\eeq
Next, we upper bound the first term on the RHS of (\ref{eq:Pr_E2_second_N}). Since more than $r_2(1-r)n$ observations were taken from all cells excepts cell $m$ during round-robin exploration, then at least $r_2(1-r)n/(M-1)$ observations were taken from each cell $j\neq m$ (and the same number of observations must have been taken from cell $m$ as well during round-robin exploration). Then, at time $n$ during exploration phase, its MLE was computed based on more than $r_2(1-r)n/(M-1)$ observations, and error event was occurred, $\hat{\theta}_m(t)\neq\theta^{(1)}$ (so that $E_2(n)$ occurred). Then,
\beq
\bea{l}
\label{eq:Pr_E2_second_N_first}
\displaystyle\mathbf{P}_m\left(E_2(n) \mbox{\;occurs}, N_m(n)<rn, \tilde{N}_{RR}(n)\geq r_2(1-r)n\right)\vspace{0.1cm}\\
\displaystyle\leq\mathbf{P}_m\left(\hat{\theta}_m(n)\neq\theta^{(1)}, N_m(n)\geq r_2(1-r)n/(M-1)\right).
\ena
\eeq
By a similar argument as we developed when proving (\ref{eq:Pr_E1_first}), the RHS of (\ref{eq:Pr_E2_second_N_first}) decreases exponentially with $n$.
		
Next, we upper bound the second term on the RHS of (\ref{eq:Pr_E2_second_N}). Since less than $r_2(1-r)n$ observations were taken from all cells excepts cell $m$ during round-robin exploration, then there exists cell $j\neq m$ in which more than $(1-r_2)(1-r)n/(M-1)$ observations were taken from it during exploitation phase. By subtracting all time instants in which the test might switch between exploration to exploitation phases, at least $(1-r_2)(1-r)n/(M-1)-r_2(1-r)n$ observations were taken during exploitation, where $\hat{\theta}_j(t)\neq\theta^{(0)}$.
 We can choose small $r_2$ (e.g., $r_2=1/(3(M-1))$) so that $(1-r_2)(1-r)n/(M-1)-r_2(1-r)n=qn$ for $0<q<1$. Thus,
\beq
\bea{l}
\label{eq:Pr_E2_second_N_second}
\displaystyle\mathbf{P}_m\left(E_2(n) \mbox{\;occurs}, N_m(n)<rn, \tilde{N}_{RR}(n)< r_2(1-r)n\right)\vspace{0.1cm}\\
\displaystyle\leq\mathbf{P}_m\left(\hat{\theta}_j(n)\neq\theta^{(0)}, N_j(n)\geq qn\right).
\ena
\eeq
By a similar argument as we developed when proving (\ref{eq:Pr_E1_first}), the RHS of (\ref{eq:Pr_E2_second_N_first}) decreases exponentially with $n$.
Hence, (\ref{eq:lemma:T_ML}) follows.
\end{proof}
	
Note that the total time spent during the round-robin exploration phase is upper bounded by $\tau_{ML}$. Hence, Statement 3 in Theorem \ref{th:DS_optimality_policy1} follows.\vspace{0.1cm}
	
\emph{Step 3: Bounding the detection time:}\vspace{0.1cm}

\begin{definition}
Assume that the DS algorithm is implemented indefinitely. Then, $\tau_U$ denotes the first time that $S_m^{(0)}(n)\geq -\log(c)$ for $n>\tau_{ML}$:
\beq
\displaystyle\tau_U\triangleq\inf\left\{n>\tau_{ML}\;:\;S_m^{(0)}(n)\geq-\log c\right\},
\eeq
and $n_U\triangleq\tau_U-\tau_{ML}$ denotes the total amount of time between $\tau_{ML}$ and $\tau_U$.
\vspace{0.1cm}
\end{definition}

It should be noted that the actual detection time $\tau$ under DS algorithm (when the stopping rule is applied) is upper bounded by $\tau_U$.
In the next lemma we show that $n_U$ cannot be significantly larger than $-(1+\epsilon)\log c/D\left(\theta^{(1)}||\theta^{(0)}\right)$ with high probability. \vspace{0.1cm}

\begin{lemma}
\label{lemma:nU}
Assume that the DS algorithm is implemented indefinitely and $H_m$ is true.
Then, for every fixed $\epsilon>0$ there exist $C>0$ and $\gamma>0$ such that
\beq
\bea{l}
\label{eq:lemma:nU}
\mathbf{P}_m\left(n_U>n\right)\leq C e^{-\gamma n} \vspace{0.1cm}\\\hspace{3cm}
\forall n>-(1+\epsilon)\log c/D\left(\theta^{(1)}||\theta^{(0)}\right)\;.
\ena
\eeq
\vspace{0.1cm}
\end{lemma}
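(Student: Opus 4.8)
The plan is to reduce $\{n_U>n\}$ to a single lower-tail event for the statistic $S_m^{(0)}$ and then dispatch it with a Chernoff bound that exploits the $(1+\epsilon)$ slack built into the admissible range of $n$. Since $\tau_U$ is the first crossing time after $\tau_{ML}$, we have the inclusion $\{n_U>n\}\subseteq\{S_m^{(0)}(\tau_{ML}+n)<-\log c\}$. After $\tau_{ML}$ every slot is an exploitation slot probing cell $m$ (because $N_{H_1}(\cdot)=1$ once the MLEs have stabilized), so the slots $\tau_{ML}+1,\ldots,\tau_{ML}+n$ contribute exactly $n$ fresh observations drawn i.i.d. from $f(\cdot|\theta^{(1)})$, and for each of them $\hat{\theta}_m(t-1)=\theta^{(1)}$ (all but possibly the very first, whose $O(1)$ contribution I absorb into the offset below). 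Writing $V\triangleq S_m^{(0)}(\tau_{ML})$ and $X_k\triangleq\log\frac{f(y_m(\tau_{ML}+k)|\theta^{(1)})}{f(y_m(\tau_{ML}+k)|\theta^{(0)})}$, the $X_k$ are i.i.d. with $\mathbf{E}_m[X_k]=D(\theta^{(1)}||\theta^{(0)})$ and $S_m^{(0)}(\tau_{ML}+n)=V+\sum_{k=1}^n X_k$.

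The point of the range $n>-(1+\epsilon)\log c/D(\theta^{(1)}||\theta^{(0)})$ is that the threshold then satisfies $-\log c<\frac{1}{1+\epsilon}\,nD(\theta^{(1)}||\theta^{(0)})$, i.e. it lies strictly below the mean drift $nD(\theta^{(1)}||\theta^{(0)})$ by a positive fraction. Fixing $\beta>0$ with $\beta<\frac{\epsilon}{1+\epsilon}D(\theta^{(1)}||\theta^{(0)})$, I would split on the initial value:
\[ \mathbf{P}_m(n_U>n)\leq \mathbf{P}_m\!\left(\sum_{k=1}^n X_k<-\log c+\beta n\right)+\mathbf{P}_m\!\left(V<-\beta n\right). \]
For the first term, $-\log c+\beta n<\big(\tfrac{1}{1+\epsilon}D(\theta^{(1)}||\theta^{(0)})+\beta\big)n=(D(\theta^{(1)}||\theta^{(0)})-\delta)n$ for some $\delta>0$, so this is a deviation of $\frac1n\sum_{k=1}^n X_k$ strictly below its mean; under the regularity conditions (finiteness of the relevant moment generating function near the origin) the Chernoff bound gives $\mathbf{P}_m(\cdot)\leq e^{-\gamma_1 n}$.

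The second term is the crux, because $V$ accumulates adaptive increments formed with pre-convergence MLEs and is therefore neither i.i.d. nor bounded below. I would control it with a further split anchored by Lemma \ref{lemma:tau_ML}: for a small $\eta>0$,
\[ \mathbf{P}_m(V<-\beta n)\leq \mathbf{P}_m(\tau_{ML}>\eta n)+\mathbf{P}_m\!\left(V<-\beta n,\ \tau_{ML}\leq\eta n\right). \]
The first piece is $\leq Ce^{-\gamma_0\eta n}$ by Lemma \ref{lemma:tau_ML}. On $\{\tau_{ML}\leq\eta n\}$ there are at most $\eta n$ pre-$\tau_{ML}$ observations from cell $m$, and each corresponding increment is bounded below by the worst-case ratio $W_k\triangleq\min_{\theta\in\Theta}\log\frac{f(y_m(r_k)|\theta)}{f(y_m(r_k)|\theta^{(0)})}$ (here finiteness of $\Theta$ is essential). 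Since the negative parts $W_k^-=\min(W_k,0)$ are i.i.d. with finite mean, the event $\{V<-\beta n\}$ on this set forces $\sum_{k=1}^{\lceil\eta n\rceil}W_k^-<-\beta n$, which for $\eta$ small enough is again a lower-tail deviation and is bounded by $e^{-\gamma_2 n}$ via Chernoff. Collecting the three exponential estimates with $\gamma=\min(\gamma_1,\gamma_0\eta,\gamma_2)$ yields the lemma.

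I expect this initial-value estimate to be the main obstacle: bounding $V=S_m^{(0)}(\tau_{ML})$ requires simultaneously keeping the number of pre-convergence increments $o(n)$ (via the exponential tail of $\tau_{ML}$ furnished by Lemma \ref{lemma:tau_ML}) and dominating those increments by i.i.d. worst-case ratios whose negative parts are integrable, the latter being exactly where the finiteness of the parameter space is invoked.
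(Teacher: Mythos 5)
Your proposal is correct and follows essentially the same route as the paper's proof: reduce $\{n_U>n\}$ to the lower-tail event $S_m^{(0)}(\tau_{ML}+n)<-\log c$, split off the pre-$\tau_{ML}$ partial sum from the post-$\tau_{ML}$ i.i.d. drift term using the slack afforded by $n>-(1+\epsilon)\log c/D(\theta^{(1)}||\theta^{(0)})$, handle the drift term by a Chernoff bound on a below-mean deviation, and handle the initial value by conditioning on $\{\tau_{ML}\le \eta n\}$ via Lemma \ref{lemma:tau_ML} and a second Chernoff bound on at most $\eta n$ increments with integrable negative parts. Your treatment of the initial value $V$ (dominating the adaptive increments by worst-case log-ratios over the finite parameter space) is in fact a more explicit rendering of the step the paper dispatches with ``$\ell_m(i)$ has finite expectation, then apply the Chernoff bound.''
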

%

\begin{proof}
We define
\beq
\tilde{\ell}_m(t)\triangleq\ell_m(t)-D(\theta^{(1)}||\theta^{(0)}),
\eeq
where the MALLR $\ell_m(t)$ at time $n\geq t$ is given by:
\beq
\ell_m(t)\triangleq\log {{f({y_m}(t)|{{\hat \theta }_m}(t))} \over {f({y_m}(t)|{{\hat \theta }^{(r)}}(n))}}.
\eeq
Recall that the test statistics is given by
$S_m^{(0)}(n)=\sum_{t=1}^{n}{\bf{1}}_m (t)\ell_m(t)$. Since that for all $t\geq\tau_{ML}$ the DS algorithm collects observations from cell $m$, then ${\bf{1}}_m (t)=1$ for all $t\geq\tau_{ML}$.
Let $\epsilon_1=D(\theta^{(1)}||\theta^{(0)})\epsilon/(1+\epsilon)>0$. Then, we can write
\beq
\bea{l}
\displaystyle\sum_{i=1}^{\tau_{ML}+n}{\bf{1}}_m (i)\ell_m(i)+\log c \vspace{0.1cm}\\
\displaystyle=\sum_{i=1}^{\tau_{ML}}{\bf{1}}_m (i)\ell_m(i)+\sum_{i=\tau_{ML}+1}^{\tau_{ML}+n}\ell_m(i)+\log c \vspace{0.1cm}\\
=\displaystyle\sum_{i=1}^{\tau_{ML}}{\bf{1}}_m (i)\ell_m(i)+\sum_{i=\tau_{ML}+1}^{\tau_{ML}+n}\tilde{\ell}_m(i)
+n D(\theta^{(1)}||\theta^{(0)})+\log c
\vspace{0.1cm}\\ 
\geq\displaystyle\sum_{i=1}^{\tau_{ML}}{\bf{1}}_m (i)\ell_m(i)+\sum_{i=\tau_{ML}+1}^{\tau_{ML}+n}\tilde{\ell}_m(i)+n\epsilon_1
\;,
\ena
\eeq
for all $n>-(1+\epsilon)\log c/D(\theta^{(1)}||\theta^{(0)})$. \\
As a result,
\beq
\bea{l}
\displaystyle\sum_{i=1}^{\tau_{ML}+n}{\bf{1}}_m (i)\ell_m(i)\leq-\log c \;.
\ena
\eeq
implies
\beq
\bea{l}
\displaystyle\sum_{i=1}^{\tau_{ML}}{\bf{1}}_m (i)\ell_m(i)+\sum_{i=\tau_{ML}+1}^{\tau_{ML}+n}\tilde{\ell}_m(i)\leq-n\epsilon_1 \;.
\ena
\eeq
Hence, for any $\epsilon>0$ there exists $\epsilon_1>0$ such that
\beq
\label{eq:Pm_nU}
\bea{l}
\displaystyle\mathbf{P}_m\left(n_U>n\right)\vspace{0.1cm}\\
\displaystyle\leq\mathbf{P}_m\left(\sum_{i=1}^{\tau_{ML}+n}{\bf{1}}_m (i)\ell_m(i)\leq-\log c\right)\vspace{0.1cm}\\
\displaystyle\leq\mathbf{P}_m\left(\sum_{i=1}^{\tau_{ML}}{\bf{1}}_m (i)\ell_m(i)+\sum_{i=\tau_{ML}+1}^{\tau_{ML}+n}\tilde{\ell}_m(i)\leq-n\epsilon_1\right)\vspace{0.1cm}\\
\displaystyle\leq\mathbf{P}_m\left(\sum_{i=1}^{\tau_{ML}}{\bf{1}}_m (i)\ell_m(i)\leq-n\epsilon_1/2\right)\vspace{0.1cm}\\
\displaystyle+\mathbf{P}_m\left(\sum_{i=\tau_{ML}+1}^{\tau_{ML}+n}\tilde{\ell}_m(i)\leq-n\epsilon_1/2\right)\vspace{0.1cm}\\
\displaystyle\leq\mathbf{P}_m\left(\sum_{i=1}^{\tau_{ML}}{\bf{1}}_m (i)\ell_m(i)\leq-n\epsilon_1/2, \tau_{ML}>\epsilon_2 n\right)\vspace{0.1cm}\\
\displaystyle+\mathbf{P}_m\left(\sum_{i=1}^{\tau_{ML}}{\bf{1}}_m (i)\ell_m(i)\leq-n\epsilon_1/2, \tau_{ML}\leq\epsilon_2 n\right)\vspace{0.1cm}\\
\displaystyle+\mathbf{P}_m\left(\sum_{i=\tau_{ML}+1}^{\tau_{ML}+n}\tilde{\ell}_m(i)\leq-n\epsilon_1/2\right)\vspace{0.1cm}\\
 \;,
\ena
\eeq
for all $n>-(1+\epsilon)\log c/D(\theta^{(1)}||\theta^{(0)})$, and $0<\epsilon_2<1$.
The first term on the RHS decreases exponentially by Lemma \ref{lemma:tau_ML}. Since $\epsilon_2>0$ can be arbitrarily small, and $\ell_m(i)$ has finite expectation, then the second term decreases exponentially by applying the Chernoff bound. Since $\tilde{\ell}_m(i)$ has zero mean for all $i>\tau_{ML}$, then the third term decrease exponentially by applying the Chernoff bound. Hence, (\ref{eq:lemma:nU}) follows.
\vspace{0.1cm}
\end{proof}

Next, we can upper bound the actual detection time under DS algorithm by combining Lemmas \ref{lemma:tau_ML}, \ref{lemma:nU}:
\beq
\label{eq:delay}
\displaystyle\mathbf{E}_m(\tau)\leq\mathbf{E}_m(\tau_{ML})+\mathbf{E}_m(n_U)\leq -\left(1+o(1)\right)\frac{\log(c)}{D(\theta^{(1)}||\theta^{(0)})}.
\eeq

Next, we can upper bound the Bayes risk under DS algorithm By combining (\ref{eq:delay}) and (\ref{eq:Pe_bound_policy1}):
\beq
\displaystyle R_m(\Gamma)\leq -\left(1+o(1)\right)\frac{c\log(c)}{D(\theta^{(1)}||\theta^{(0)})}.
\eeq

Finally, Combining the upper bound on the Bayes risk with the lower bound on the Bayes risk $R_m(\Gamma)\geq-\left(1+o(1)\right)\frac{c\log(c)}{D(\theta^{(1)}||\theta^{(0)})}$ that was obtained in \cite{cohen2015active} under simple hypotheses completes the proof.
\newcommand*{\QEDA}{\hfill\ensuremath{\blacksquare}}
\QEDA
\vspace{0.1cm}

\subsubsection{Extending the Proof of Theorem~\ref{th:DS_optimality_policy1} for Continuous Parameter Space}
\label{ssec:extending}

Next, we focus on the key steps used for extending the proof of Theorem~\ref{th:DS_optimality_policy1} for continuous parameter space.   We need the following requirement for the consistency of the MLE: For all  $\epsilon>0$, we require that $\mathbf{P}_m\left(|\tilde{\theta}_m(n)-\theta^{(1)}|>\epsilon\right)$ decays only polynomially with $n$. To achieve this, we require that the parameters space $\Theta^{(0)}$, $\Theta^{(1)}$ are open sets. Then, the condition holds for a wide class of distributions, including exponential family distributions (see e.g., \cite{fu1993large}). \vspace{0.1cm}
\\
\noindent
\emph{Step 1: Bounding the error probability:}\vspace{0.1cm}\\
\noindent
Bounding the error probability in Step 1 in Appendix \ref{app:proof_policy1} directly applies to continuous parameter space. \vspace{0.1cm}\\
\noindent
\emph{Step 2: Bounding $\tau_{ML}(\epsilon_3)$:}\vspace{0.1cm}\\\
\noindent
Since the MLEs take continuous values, instead of defining $\tau_{ML}$ as in Appendix \ref{app:proof_policy1}, we define $\tau_{ML}(\epsilon_3)$ for some $\epsilon_3>0$ as the smallest integer such that $|\hat{\theta}_m(n)-\theta^{(1)}|\leq\epsilon_3$, and $|\hat{\theta}_j(n)-\theta^{(0)}|\leq\epsilon_3$ for all $j\neq m$ for all $n>\tau_{ML}(\epsilon_3)$, when $H_m$ is the true hypothesis. We require that the parameters take values in the interior of the parameter spaces. Then, we can choose a sufficiently small $\epsilon_3$ so that for all $n>\tau_{ML}(\epsilon_3)$ only the exploitation phase is implemented. As a result, the time spent in a round-robin exploration phase is upper bounded by $\tau_{ML}(\epsilon_3)$ similarly to upper bounding the round-robin exploration time by $\tau_{ML}$ as in Appendix \ref{app:proof_policy1}.

We then modify Lemma \ref{lemma:tau_ML} so that to show at least polynomial decay of $\mathbf{P}_m\left(\tau_{ML}(\epsilon_3)>n\right)$ for any $\epsilon_3>0$ (since polynomial decay is sufficient to guarantee a finite expected value of $\tau_{ML}(\epsilon_3)$). Proving the modified lemma requires similar steps as in Appendix \ref{app:proof_policy1} with the following modification. Since the MLEs take continuous values, instead of referring to the events in which the MLEs are not equal to the true parameter values $\hat{\theta}_m(n)\neq\theta^{(1)}$, and $\hat{\theta}_j(n)\neq\theta^{(0)}$ for all $j\neq m$ as in Appendix \ref{app:proof_policy1}, we refer to the events in which the MLEs deviate from the true parameter values by $\epsilon_3$. As a result, for bounding $\mathbf{P}_m\left(\tau_{ML}(\epsilon_3)>n\right)$ (as in equation (\ref{eq:Pr_E1_first_first})) we need to require only weak consistency of the MLEs so that $\mathbf{P}_m\left(|\tilde{\theta}_m(n)-\theta^{(1)}|>\epsilon_3\right)$ decays only polynomially with $n$ as mentioned above. \vspace{0.1cm}\\
\noindent
\emph{Step 3: Bounding the detection time:}\vspace{0.1cm}\\\
\noindent
Step 3 follows similar steps as in Appendix \ref{app:proof_policy1} for any $\epsilon_3>0$. Since $\epsilon_3>0$ is arbitrarily small, the theorem follows.

\subsection{Proof of Theorem~\ref{th:DS_optimality_policy1_2}}
\label{app:proof_policy1_2}
We now prove the asymptotic optimality criterion given in (\ref{eq:Frequentist_formulation}). Here we are interested in detection procedures that satisfy the constraint $P_e(\Gamma) \leq \alpha$. The class of such detection algorithms will be denoted by $C(\alpha)$.\\
By applying the same steps as in the proof of Theorem 1, we can show that: $P_e(\Gamma) \leq (M-1)e^{-a} $.
It follows that $a = \log(\frac{M-1}{\alpha})$ implies that $\Gamma \in C(\alpha) $. \\	
Next, the upper bound for $\tau_{ML}$ holds with the same steps as in Lemma \ref{lemma:tau_ML}. We define $\tau_U$ and $n_U$ as in \ref{app:proof_policy1}.
To prove the asymptotic optimality, first note that the lower bound on the detection time is given by:
\beq
\label{lemma:upper_bound}
\displaystyle \inf_{\Gamma \in C(\alpha)} \mathbb{E}_m(\tau|\Gamma) \geq \frac{|\log \alpha|}{D(\theta^{(1)}||\theta^{(0)})}(1+o(1)), \; \;  \alpha \rightarrow 0,
\eeq
which can be derived following the same steps as in \cite{tartakovsky2002efficient}.
We next provide the proof for the upper bound on the detection time.
\begin{lemma}
\label{lemma:lower_bound}
Assume that the DS algorithm is implemented indefinitely. Then, 
\beq
\label{eq:lower_bound}
\displaystyle \mathbb{E}_m(\tau|\Gamma^*) \leq \frac{|\log \alpha|}{D(\theta^{(1)}||\theta^{(0)})}(1+o(1)), \; \;  \alpha \rightarrow 0. 
\eeq \\
\end{lemma}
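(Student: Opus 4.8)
The plan is to recycle, essentially verbatim, the detection-time machinery already built in Appendix \ref{app:proof_policy1} for the Bayesian boundary $-\log c$, and simply replace that boundary everywhere by the frequentist threshold $a = \log\left(\frac{M-1}{\alpha}\right)$ (which the preceding paragraph already showed guarantees $\Gamma^* \in C(\alpha)$). Since the actual stopping time obeys $\tau \le \tau_U = \tau_{ML} + n_U$, I would open from $\mathbb{E}_m(\tau) \le \mathbb{E}_m(\tau_{ML}) + \mathbb{E}_m(n_U)$. The first term is $O(1)$ by Lemma \ref{lemma:tau_ML}, which carries over unchanged because $\tau_{ML}$ describes the selection and estimation behaviour of the algorithm and is insensitive to the value of the stopping boundary.

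The core step is to re-prove Lemma \ref{lemma:nU} with $-\log c$ replaced by $a$. Here I would keep the decomposition $\tilde\ell_m(t) = \ell_m(t) - D(\theta^{(1)}||\theta^{(0)})$ into a zero-mean increment, set $\epsilon_1 = D(\theta^{(1)}||\theta^{(0)})\epsilon/(1+\epsilon)$, and observe that for every $n > (1+\epsilon)a/D(\theta^{(1)}||\theta^{(0)})$ the event $\{n_U > n\}$ forces $\sum_{i=1}^{\tau_{ML}} \mathbf{1}_m(i)\ell_m(i) + \sum_{i=\tau_{ML}+1}^{\tau_{ML}+n} \tilde\ell_m(i) \le -n\epsilon_1$. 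Splitting this into the bounded pre-$\tau_{ML}$ contribution (handled via Lemma \ref{lemma:tau_ML} together with a Chernoff bound on the small-$\tau_{ML}$ event) and the zero-mean tail sum (handled by a Chernoff bound, since $\tilde\ell_m$ has mean zero for $i > \tau_{ML}$) yields $\mathbf{P}_m(n_U > n) \le Ce^{-\gamma n}$ on that range, exactly as in the Bayesian case.

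From this tail bound I would pass to the expectation by writing $\mathbb{E}_m(n_U) = \sum_{n\ge 0} \mathbf{P}_m(n_U > n)$ and splitting the sum at $n^\star = (1+\epsilon)a/D(\theta^{(1)}||\theta^{(0)})$: the terms below $n^\star$ contribute at most $n^\star + 1$ and the geometric tail above contributes $O(1)$, so $\mathbb{E}_m(n_U) \le (1+\epsilon)a/D(\theta^{(1)}||\theta^{(0)}) + O(1)$. Combining with $\mathbb{E}_m(\tau_{ML}) = O(1)$ and substituting $a = |\log\alpha| + \log(M-1) = |\log\alpha|(1+o(1))$ gives $\mathbb{E}_m(\tau) \le (1+\epsilon)\frac{|\log\alpha|}{D(\theta^{(1)}||\theta^{(0)})}(1+o(1))$; taking the $\limsup$ as $\alpha \to 0$ and then letting $\epsilon \downarrow 0$ produces (\ref{eq:lower_bound}). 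Paired with the cited matching lower bound (\ref{lemma:upper_bound}), this closes the asymptotic optimality claim of Theorem \ref{th:DS_optimality_policy1_2}.

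The hard part is not conceptual but purely one of bookkeeping: I must check that nothing in the $\tau_{ML}$ analysis or in the zero-mean tail argument actually used the specific value $-\log c$, so that the replacement $-\log c \mapsto a$ is genuinely harmless, and that the two-part split of $\mathbb{E}_m(n_U)$ does not smuggle an $\alpha$-dependent quantity into the $O(1)$ remainder. Once those two points are verified, the bound follows mechanically from the already-established exponential tail estimates.
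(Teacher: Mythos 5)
Your proof is correct, but it follows a different route from the paper's own proof of this lemma. You transplant the Bayesian-threshold machinery of Lemma~\ref{lemma:nU} — re-derive the exponential tail bound $\mathbf{P}_m(n_U>n)\leq Ce^{-\gamma n}$ for $n>(1+\epsilon)a/D(\theta^{(1)}||\theta^{(0)})$ with $-\log c$ replaced by $a$, then integrate the tail by splitting the sum $\sum_n \mathbf{P}_m(n_U>n)$ at $n^\star$. This is in fact exactly the route the paper itself takes in Step~3 of Appendix~\ref{app:proof_policy1} (Lemma~\ref{lemma:nU} followed by (\ref{eq:delay})), so it is a legitimate and already-validated path; your flagged bookkeeping concerns are resolved because the constants $C,\gamma$ in that tail bound arise from Chernoff bounds on $\tau_{ML}$ and on the zero-mean increments $\tilde\ell_m$, neither of which involves the boundary value, which enters only through the range of $n$. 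The paper's proof of this particular lemma instead uses a last-exit-time argument: it defines $L(\epsilon,\theta)$ as in (\ref{eq:exit_times}), obtains the pathwise bound $n_U\leq 1+L(\epsilon,\theta)+a/(D(\theta^{(1)}||\theta^{(0)})-\epsilon)$ by intersecting the event $\{n_U>1+L(\epsilon,\theta)\}$ (on which $S_m^{(0)}(n_U-1)\geq (n_U-1)(D-\epsilon)$) with $\{S_m^{(0)}(n_U-1)<a\}$, and then only needs $\mathbf{E}[L(\epsilon,\theta)]<\infty$. The last-exit-time route buys a cleaner separation: the dependence on $a$ is isolated in a deterministic additive term and the stochastic fluctuation is a single finite-mean random variable, so no uniformity-in-$a$ check is needed; your route buys reuse of an already-proved lemma at the cost of having to verify (as you correctly do) that nothing in its constants depends on the threshold. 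Both yield $\mathbb{E}_m(n_U)\leq a/D(\theta^{(1)}||\theta^{(0)})\cdot(1+o(1))$ and, with $\mathbb{E}_m(\tau_{ML})=O(1)$ from Lemma~\ref{lemma:tau_ML} and $a=\log((M-1)/\alpha)=|\log\alpha|(1+o(1))$, the stated bound.
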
 

\begin{proof}
We define the last exit times $L(\epsilon, \theta)$. For all $\epsilon>0$:
\beq
\label{eq:exit_times}
\displaystyle L(\epsilon, \theta)=sup \{n \geq \tau_{ML} |\frac{S_m^{(0)}(n)}{n}-D(\theta^{(1)}||\theta^{(0)}) | > \epsilon  \}.
\eeq
Under $H_m$, \vspace{0.1cm} \\ 
\vspace{0.1cm} $S_m^{(0)}(n_U-1) \geq (n_U-1)(D(\theta^{(1)}||\theta^{(0)})- \epsilon) $ \\
\vspace{0.1cm} on $\{ n_U > L(\epsilon, \theta)+1 \}$, and \\
\vspace{0.1cm} $S_m^{(0)}(n_U-1) < a $, on $\{ n_U < \infty \}$. \\
\vspace{0.1cm}Therefore, for every $ 0< \epsilon < D(\theta^{(1)}||\theta^{(0)})$,\\
\vspace{0.1cm} $ \displaystyle n_U < (1+\frac{a}{D(\theta^{(1)}||\theta^{(0)})- \epsilon}) \mathbf{1}_{\{n_U>1+L(\epsilon,\theta)\}} \\
\vspace{0.1cm} + [1 + L(\epsilon,\theta)] \mathbf{1}_{ \{n_U<1+L(\epsilon,\theta)\}} \\
\vspace{0.1cm} \displaystyle \leq 1 + L(\epsilon,\theta)+ \frac{a}{D(\theta^{(1)}||\theta^{(0)}) - \epsilon}$. \\
By using Chernoff bound we can show that $E[L(\epsilon,\theta)] < \infty$, and by letting $ \epsilon \rightarrow 0 $ and choosing $a = \log(\frac{M-1}{\alpha})$ we get: \vspace{0.1cm} \\
\vspace{0.1cm} $\displaystyle E_m[n_U] \leq \frac{|\log \alpha|}{D(\theta^{(1)}||\theta^{(0)})}(1+o(1))$ as $\alpha \rightarrow 0  $, \\
and combining with the upper bound of $\tau_{ML}$ derived in Lemma \ref{lemma:tau_ML}, we prove the Lemma.
\end{proof}
Finally, by combining Lemma \ref{lemma:lower_bound}, with (\ref{lemma:upper_bound}) we complete the proof.

\subsection{Proof of Theorem~\ref{th:DS_optimality_policy_uncorrelated}}
\label{app:proof_policy_uncorrelated}
We focus on the key steps used for extending the proof of Theorem \ref{th:DS_optimality_policy1} to the settings in which both parameters under normal and abnormal states $\theta^{(0)},\theta^{(1)}$ are unknown, and no additional side information of the parameter values are given. Without loss of generality we prove the theorem when hypothesis $m$ is true, and the proof is derived using the LALLR statistics defined in (\ref{eq:MALLR_L}), i.e., $S_m^{(0)}(n) = S_{m,LALLR}^{(0)}(n)$. \vspace{0.1cm} \\
\emph{Step 1: Bounding the error probability:}\vspace{0.1cm}\\
We begin by upper bounding the error probability for all $c$.
With the same notation as in Step 1 in Appendix \ref{app:proof_policy1}, we need to show that $\alpha_{m,j}=\mathbf{P}_m(\delta=j) \leq c$. We first notice: \vspace{0.1cm} \\
$\vspace{0.1cm} \hspace{0.0cm} S_j^{(0)}(\tau) = \displaystyle \sum\limits_{t = 1}^\tau {{{\bf{1}}_j}} (t)\log {{f({y_j}(t)|{{\hat \theta }_j}(t-1))} \over {f({y_j}(t)|{{\hat{\theta}_j^{(0)}}}(\tau))}} \\
\vspace{0.1cm} \hspace{0.8cm} = \displaystyle \min_{\varphi \in \Theta^{(0)}} \sum\limits_{t = 1}^\tau {{{\bf{1}}_j}} (t)\log {{f({y_j}(t)|{{\hat \theta }_j}(t-1))} \over {f({y_j}(t)|{{\varphi})}} } \\
\vspace{0.1cm} \hspace{0.8cm} \leq \displaystyle \sum\limits_{t = 1}^\tau {{{\bf{1}}_j}} (t)\log {{f({y_j}(t)|{{\hat \theta }_j}(t-1))} \over {f({y_j}(t)|{{\theta^{(0)}})}} }$. \\
Hence, we have: \vspace{0.1cm} \\
$\vspace{0.1cm} \hspace{0.0cm} \alpha_{m,j} = \mathbf{P}_m(\delta=j) \\
 \vspace{0.1cm} \hspace{0.0cm}  = \displaystyle\mathbf{P}_m\left(S_j^{(0)}(\tau)\geq-\log c \mbox{\; for some $\tau\geq 1$}\right)\\
\vspace{0.1cm} \hspace{0.0cm} \leq \displaystyle\mathbf{P}_m \bigg( \sum\limits_{t = 1}^\tau {{{\bf{1}}_j}} (t)\log {{f({y_j}(t)|{{\hat \theta }_j}(t-1))} \over {f({y_j}(t)|{{\theta^{(0)}})}} } \geq-\log c \\
\vspace{0.1cm} \hspace{5.5cm} \mbox{\; for some $\tau\geq 1$} \bigg)$. \\
Next, we can use similar steps as in Appendix \ref{app:proof_policy1}, starting at (\ref{eq:1}) onwards, to prove that $\alpha_{m,j} \leq c $, which implies that the error probability is upper bounded by $(M-1)c$ for all $c$. Thus, Statement $1$ in Theorem \ref{th:DS_optimality_policy_uncorrelated} follows. \vspace{0.1cm} \\
\emph{Step 2: Bounding $\tau_{ML}$:} \vspace{0.1cm}\\   
Upper bounding $\tau_{ML}$ (see (\ref{eq:lemma:T_ML})) follows the same steps as in Lemma \ref{lemma:tau_ML}. Hence, Statement $3$ in Theorem \ref{th:DS_optimality_policy_uncorrelated} follows. \vspace{0.1cm} \\
\emph{Step 3: Bounding the detection time:} \vspace{0.1cm}\\ 
We define $\tau_U$ and $n_U$ similarly as in step 3 in Appendix \ref{app:proof_policy1}:
\begin{definition}
Assume that the DS algorithm is implemented indefinitely. Then, $\tau_U$ denotes the first time that $S_m^{(0)}(n)\geq -\log(c)$ for $n>\tau_{ML}$:
\beq
\displaystyle\tau_U\triangleq\inf\left\{n>\tau_{ML}\;:\;S_m^{(0)}(n)\geq-\log c\right\},
\eeq
and $n_U\triangleq\tau_U-\tau_{ML}$ denotes the total amount of time between $\tau_{ML}$ and $\tau_U$.
\vspace{0.1cm}
\end{definition}
We also define $\displaystyle \ell_m^{(\theta,\varphi)}(t) = \log {{f({y_m}(t)|{\theta})} \over {f({y_m}(t)|{{\varphi})}}} $. \vspace{0.1cm} \\
Note that $S_m^{(0)}(n) =  \sum\limits_{t = 1}^n {{{\bf{1}}_j}}(t) \ell_m^{(\theta^{(1)},\hat{\theta}_m^{(0)}(n))}(t)$ for all $n>\tau_{ML}$.
Define $\tau_U(\varphi)$ to be the first time that 
$\displaystyle \sum\limits_{t = 1}^n {{{\bf{1}}_m}} (t) \ell_m^{(\theta^{(1)},\varphi)}(t) \geq -\log c$ for $n>\tau_{ML}$, and define $n_U(\varphi)=\tau_U(\varphi)-\tau_{ML}$. 
Clearly, $n_U \leq \displaystyle n_U(\varphi)$ for each $\varphi \in \Theta^{(0)}$. We now bound $n_U(\varphi)$ for each $\varphi \in \Theta^{(0)}$.
\begin{lemma}
\label{lemma:nU_uncorrelated}
Assume that the DS algorithm is implemented indefinitely and $H_m$ is true.
Then, for each $\varphi \in \Theta^{(0)}$ and for every fixed $\epsilon>0$ there exist $C>0$ and $\gamma>0$ such that
\beq
\bea{l}
\label{eq:lemma:nU_uncorrelated}
\mathbf{P}_m\left(n_U(\varphi)>n\right)\leq C e^{-\gamma n} \vspace{0.1cm}\\\hspace{3cm}
\forall n>-(1+\epsilon)\log c/D\left(\theta^{(1)}\right)\;.
\ena
\eeq
\vspace{0.1cm}
\end{lemma}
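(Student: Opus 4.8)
The plan is to mirror the proof of Lemma~\ref{lemma:nU}, exploiting that $n_U(\varphi)$ is defined through the \emph{fixed}-parameter increment $\ell_m^{(\theta^{(1)},\varphi)}(t)=\log\frac{f(y_m(t)|\theta^{(1)})}{f(y_m(t)|\varphi)}$, which is simpler to control than the minimized statistic $S_m^{(0)}$. First I would record the two structural facts that drive everything: for every $t>\tau_{ML}$ the exploitation phase probes cell $m$, so $\mathbf{1}_m(t)=1$; and under $H_m$ the observations collected from cell $m$ are i.i.d.\ with density $f(\cdot|\theta^{(1)})$, so each increment has mean $\mathbf{E}_m[\ell_m^{(\theta^{(1)},\varphi)}(t)]=D(\theta^{(1)}||\varphi)$. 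Centering, I set $\tilde\ell_m^{(\theta^{(1)},\varphi)}(t)=\ell_m^{(\theta^{(1)},\varphi)}(t)-D(\theta^{(1)}||\varphi)$, a zero-mean i.i.d.\ sequence once $t>\tau_{ML}$.

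The crux is a deterministic implication analogous to the chain of inequalities in Lemma~\ref{lemma:nU}. Using $D(\theta^{(1)}||\varphi)\geq D(\theta^{(1)})$ (by definition $D(\theta^{(1)})=\min_{\varphi\in\Theta^{(0)}}D(\theta^{(1)}||\varphi)$) and fixing $\epsilon_1=\epsilon D(\theta^{(1)})/(1+\epsilon)>0$, I would show that for every $n>-(1+\epsilon)\log c/D(\theta^{(1)})$,
\[
\sum_{i=1}^{\tau_{ML}+n}\mathbf{1}_m(i)\,\ell_m^{(\theta^{(1)},\varphi)}(i)+\log c\;\geq\;\sum_{i=1}^{\tau_{ML}}\mathbf{1}_m(i)\,\ell_m^{(\theta^{(1)},\varphi)}(i)+\sum_{i=\tau_{ML}+1}^{\tau_{ML}+n}\tilde\ell_m^{(\theta^{(1)},\varphi)}(i)+n\epsilon_1 ,
\]
because the $n$ exploitation increments contribute a drift with $nD(\theta^{(1)}||\varphi)+\log c\geq nD(\theta^{(1)})+\log c\geq n\epsilon_1$ exactly on the stated range of $n$. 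Consequently the event $\{n_U(\varphi)>n\}$, i.e.\ the statistic has not yet reached $-\log c$ by time $\tau_{ML}+n$, forces the left-hand side to be negative, hence the initial (pre-$\tau_{ML}$) partial sum plus the centered exploitation sum lies below $-n\epsilon_1$.

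From here I would finish exactly as in Lemma~\ref{lemma:nU}: a union bound splits the event at level $-n\epsilon_1/2$ into (a) the initial partial sum being below $-n\epsilon_1/2$ and (b) the centered sum being below $-n\epsilon_1/2$. Term (a) is split again on $\{\tau_{ML}>\epsilon_2 n\}$ versus $\{\tau_{ML}\leq\epsilon_2 n\}$: the former decays exponentially by Lemma~\ref{lemma:tau_ML}, while on the latter the partial sum has at most $\epsilon_2 n$ increments of finite mean, so a Chernoff bound with $\epsilon_2$ chosen sufficiently small gives exponential decay. Term (b) is a sum of $n$ i.i.d.\ zero-mean increments, so a direct Chernoff bound (using that the moment generating function of $\tilde\ell_m^{(\theta^{(1)},\varphi)}$ equals one at $s=0$ with strictly negative derivative there) yields exponential decay. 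Choosing $C,\gamma>0$ to dominate all three contributions establishes (\ref{eq:lemma:nU_uncorrelated}).

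The step I expect to require the most care is (b): although the increments $\tilde\ell_m^{(\theta^{(1)},\varphi)}(i)$ for $i>\tau_{ML}$ are i.i.d.\ and zero-mean, $\tau_{ML}$ is a last-exit time rather than a stopping time, so the Chernoff bound must be applied to the fresh i.i.d.\ observations drawn after $\tau_{ML}$ without illicit conditioning — handled precisely as in Lemma~\ref{lemma:nU}. A secondary point worth stating explicitly is that the lemma is deliberately phrased for each fixed $\varphi$ with the \emph{conservative} range governed by $D(\theta^{(1)})$ rather than $D(\theta^{(1)}||\varphi)$; this is what later lets one combine the bound $n_U\leq n_U(\varphi)$ established above the lemma (taken at the minimizing $\varphi$) with the present tail estimate to conclude $\mathbf{E}_m[n_U]\leq-(1+o(1))\log c/D(\theta^{(1)})$, and hence the asymptotically optimal Bayes risk $R^*\sim-c\log c/D(\theta^{(1)})$ claimed in Theorem~\ref{th:DS_optimality_policy_uncorrelated}.
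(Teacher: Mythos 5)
Your proposal is correct and follows essentially the same route as the paper: center the fixed-$\varphi$ increment $\ell_m^{(\theta^{(1)},\varphi)}$ at $D(\theta^{(1)}||\varphi)$, rerun the three-term decomposition of Lemma~\ref{lemma:nU} (Lemma~\ref{lemma:tau_ML} for the $\{\tau_{ML}>\epsilon_2 n\}$ piece, Chernoff bounds for the other two), and invoke $D(\theta^{(1)}||\varphi)\geq D(\theta^{(1)})$ to get the stated range of $n$. The only cosmetic difference is that you fold that last inequality into the choice $\epsilon_1=\epsilon D(\theta^{(1)})/(1+\epsilon)$ up front, whereas the paper takes $\epsilon_1=\epsilon D(\theta^{(1)}||\varphi)/(1+\epsilon)$ and applies the inequality at the end; both yield the same conclusion.
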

\begin{proof}
Define $\widetilde{\ell}_m^{(\theta^{(1)},\varphi)}(t) = \ell_m^{(\theta^{(1)},\varphi)}(t) - D(\theta^{(1)}||\varphi)$. Using the same steps as in the proof of Lemma \ref{lemma:nU} (choosing this time $\epsilon_1 = D(\theta^{(1)}||\varphi)\epsilon/(1+\epsilon)>0$), equation (\ref{eq:Pm_nU}) holds with $\ell_m^{(\theta^{(1)},\varphi)}(t)$ and $\widetilde{\ell}_m^{(\theta^{(1)},\varphi)}(t)$ instead of $\ell_m(t)$ and $\widetilde{\ell}_m(t)$, respectively. Again, since $\widetilde{\ell}_m^{(\theta^{(1)},\varphi)}(t)$ has zero mean for all $t>\tau_{ML}$, all the three terms can be bounded as done in (\ref{eq:Pm_nU}). \\Since $D(\theta^{(1)}||\varphi) \geq D(\theta^{(1)}), \; \forall \varphi \in \Theta^{(0)}$, (\ref{eq:lemma:nU_uncorrelated}) follows. \\  
\end{proof} 
Using Lemma \ref{lemma:nU_uncorrelated} we have: \vspace{0.1cm} \\
$\mathbf{P}_m\left(n_U>n\right)\leq \mathbf{P}_m\left(n_U(\varphi)>n\right) \leq C e^{-\gamma n} \vspace{0.1cm}\\ 
\vspace{0.1cm} \hspace{4cm} \forall n>-(1+\epsilon)\log c/D\left(\theta^{(1)}\right),$ 
thus, the actual detection time is upper bounded by: \vspace{0.1cm}\\
$\displaystyle\mathbf{E}_m(\tau)\leq\mathbf{E}_m(\tau_{ML})+\mathbf{E}_m(n_U)\leq -\left(1+o(1)\right)\frac{\log(c)}{D(\theta^{(1)})}$, \\
and using the bound on the error probability obtained in step 1, the Bayes risk is upper bounded by:  
\begin{center}
$\displaystyle R_m(\Gamma)\leq -\left(1+o(1)\right)\frac{c\log(c)}{D(\theta^{(1)})}$. 
\end{center}
Combining the upper bound with the lower bound from \cite{Chernoff_1959_Sequential} completes the proof.
\QEDA
\subsection{Proof of Theorem~\ref{th:DS_generelized_policy}}
\label{app:proof_policy4}

For purposes of analysis we require that the stopping rule does not stop the test before $-\epsilon\log c$ samples have been taken, where $\epsilon>0$ is arbitrarily small. Also, when updating the sum MALLRs at time $t$, we use the current estimates for all $n=1, ..., t$.

Without loss of generality, let $H_m$ be the true hypothesis. Let $P_e=\sum_{m=1}^{M}\pi_m \alpha_m$ be the error probability, where
\begin{center}
$\alpha_{m,j}=\mathbf{P}_m(\delta=j)$
\end{center}
for all $j\neq m$.
Thus,
\begin{center}
$\displaystyle\alpha_m=\sum_{j\neq m}\alpha_{m,j}$.
\end{center}
Therefore, we need to show that $\alpha_{m,j}$ decays polynomially with $-\log c$. Note that $\alpha_{m,j}$ can be rewritten as follows:
\beq
\bea{l}
\displaystyle\alpha_{m,j}=\mathbf{P}_m\left(\delta=j\right) 
=\displaystyle\mathbf{P}_m\left(\delta=j, \tau_{ML}>\tau\right)\vspace{0.1cm} \\\hspace{3.5cm}
+\displaystyle\mathbf{P}_m\left(\delta=j, \tau_{ML}\leq\tau\right)\vspace{0.1cm}.
\ena
\eeq
Since the stopping rule does not stop the test before $-\epsilon\log c$ samples have been taken, the first term on the RHS is upper bounded by $C\tau^{-\gamma}\leq C(-\epsilon\log c)^{-\gamma}$, for some constants $C, \gamma, \epsilon>0$, resulting in a polynomial decay with $-\log c$. Thus, it remains to show that the second term on the RHS decreases polynomially with $-\log c$.

Accepting $H_j$ at time $n$ implies $S_j^{(0)}(n)+S_{m^{(1)}(n)}^{(1)}(n)\geq -\log c$, which implies $S_j^{(0)}(n)+S_m^{(1)}(n)\geq -\log c$. Hence, for all $j\neq m$ we obtain:
\beq
\bea{l}
\displaystyle\mathbf{P}_m\left(\delta=j, \tau_{ML}\leq\tau\right) \vspace{0.1cm} \\\hspace{0.0cm}
\leq\displaystyle\mathbf{P}_m\left(S_j^{(0)}(n)+S_m^{(1)}(n)\geq -\log c, \tau_{ML}\leq\tau\right)\vspace{0.1cm} \\\hspace{0.0cm}
\leq\displaystyle c\mathbf{P}_j\left(S_j^{(0)}(n)+S_m^{(1)}(n)\geq -\log c, \tau_{ML}\leq\tau\right)\leq c\vspace{0.1cm},
\ena
\eeq
where changing the measure in the second inequality follows by the fact that $S_j^{(0)}(n)+S_m^{(1)}(n)\geq -\log c$, and that the estimates are given by the true parameters for all $\tau\geq \max \{\tau_{ML}, \widetilde{\tau}_{ML}\}$ (where the current estimates are updated for all $n=1, ..., \tau$). As a result, the theorem follows.
\QEDA

\bibliographystyle{ieeetr}

\end{document}